\title{Regulating algorithmic filtering on social media} 
\author{%
	Sarah H.~Cen\\
	MIT EECS\\
	\texttt{shcen@mit.edu} \\
	\And 
	Devavrat Shah
	\\
	MIT EECS\\
	\texttt{devavrat@mit.edu} \\
}
\begin{document}

	\maketitle


\begin{abstract}
	By filtering the content that users see, social media platforms have the ability to influence users' perceptions and decisions, from their dining choices to their voting preferences. 
	This influence has drawn scrutiny, with many calling for regulations on filtering algorithms, 
	but designing and enforcing regulations remains challenging. 
	In this work, we examine three questions. 
	First, 
	given a regulation, how would one design an audit to enforce it?
	Second, does the audit impose a performance cost on the platform?
	Third, how does the audit affect the content that the platform is incentivized to filter?
	In response, we propose a method such that, given a regulation, an auditor can test whether that regulation is met with only black-box access to the filtering algorithm. 
	We then turn to the platform's perspective. The platform's goal is to maximize an objective function while meeting regulation. 
	We find that there are conditions under which the regulation does not place a high performance cost on the platform and, notably, that content diversity can play a key role in aligning the interests of the platform and regulators.
\end{abstract}


\section{Introduction}\label{sec:intro}

In recent years,
there have been increasing calls to regulate how social media platforms
\emph{algorithmically filter} the content that appears on a user's feed. 
For example, 
one may ask that the advertisements a user sees not be based (explicitly or implicitly) on their sexual orientation  \citep{wachter2020affinity}
or that content related to public health (e.g., COVID-19) does not reflect a user's political affiliation \cite{antivax2021germani}. 

However, 
translating a regulatory guideline into an auditing procedure has proven difficult. 
Developing such a method is the focus of this work. 
As our main contribution,
we provide a general procedure such that, 
\emph{given a regulation, 
an auditor can test whether the platform complies with the regulation. }

Providing a general procedure is important because, 
without it, 
the auditing of algorithmic filtering is destined to be {reactive}: auditors must design ways to enforce a regulation as issues arise, 
and there is an inevitable delay between design and enforcement 
in order to test the proposed solutions.

On the other hand, 
designing auditing procedures is challenging because audits can have unintended side effects on the many stakeholders in the social media ecosystem.
As a result, we consider two additional questions in this work:
\begin{enumerate}[noitemsep, nosep, leftmargin=24pt]
	\item \emph{How does the proposed auditing procedure affect the platform's bottom line?}
	We consider whether the procedure imposes a high performance cost on the platform. 
	
	\item \emph{How does the auditing procedure affect the user's content?} 
	We consider what type of content the platform is incentivized to show the user when the platform complies with the regulation.
\end{enumerate}
Our main contributions are summarized as follows. 

\noindent 
\textbf{Auditing procedure}. 
As our main contribution, 
we provide a procedure such that, 
given a regulation, 
an auditor can test whether the platform complies with the regulation (Section \ref{sec:audit}). 
We restrict our attention to regulations that can be written in \emph{counterfactual form},
including the two examples given at the top of this Introduction.
Namely, the procedure applies to any regulation that can be 
written 
as:
``the filtering algorithm $\cF$ should behave similarly under inputs $\fvec$ and $\fvec'$ for all $(\fvec, \fvec') \in \cS$''.
How to quantify ``similarity'' under $\fvec$ and $\fvec'$ is crucial, 
and our second main contribution is to provide a precise notion of ``similarity'' in the context of algorithmic filtering (Section \ref{sec:problem_statement}). 

Operationally, 
the auditing procedure has several desirable properties. 
First, 
it needs only black-box access to the algorithm and therefore holds even if the filtering algorithm $\cF$ changes. 
Second, 
it does not require access to the platform's users or their personal data. 
Third, 
its parameters are interpretable and easy to tune. 
Finally, 
the procedure is modular, which allows for many possible configurations.

\noindent 
\textbf{Provable guarantees}. 
We begin by observing that algorithmic filtering is powerful (and often harmful) because \emph{information influences decisions}: 
the content that a user sees can affect how they vote, 
whether they choose to receive a vaccine, 
what restaurants they frequent, and more. 
Therefore, if one seeks to enforce a counterfactual regulation, the notion of ``similarity'' that is enforced should be with respect to the outcome of interest: the users' decisions.

Suppose that the filtering algorithm $\cF$ generates content $Z$ when given inputs $\fvec$ and $Z'$ when given $\fvec'$. 
In Section \ref{sec:theory}, 
we prove that, if $\cF$ passes the audit, 
the decision-making of \emph{any} user if shown $Z$ 
and their decision-making if they were shown $Z'$ instead are (asymptotically) indistinguishable (Theorem \ref{thm:reg_hyp_test}). 
This guarantee is powerful because it holds 
for \emph{any} user and \emph{any} decision
even though the audit does not have access to the users' personal data 
(e.g., gender) 
or know how users make decisions (e.g., how easily a user is influenced by restaurant advertisements). 
Providing such a strong guarantee {without} access to users 
is made possible by two well-known concepts from decision and learning theory: the hypothesis test and minimum-variance unbiased estimator. 
Combining these tools is one of our main technical insights and discussed in Section \ref{sec:theory}.

\noindent 
\textbf{Cost of regulation}. 
In Section \ref{sec:cost}, we study how the audit affects a platform’s ability to maximize an objective function $R$---which we refer to as \emph{reward}---and find that being audited does not necessarily place a high performance cost on the platform. 
Studying the \emph{cost of regulation} is important because
there are serious concerns that regulations 
can hurt innovation or profits, 
and our findings surface conditions under which a performance-regulation trade-off does not exist.
We further note that we leave $R$ unspecified. 
As a result, the analysis in Section \ref{sec:cost} is applicable to any $R$, 
whether it is the platform's or an advertiser's objective function. 
As examples, $R$ could measure time spent on the platform, the number of clicks on posts, or a combination of these factors.

\noindent 
\textbf{Content diversity}. 
In Section \ref{sec:diversity}, 
we turn our attention to how an audit would affect the users’ content and discover an unexpected connection to content diversity. 
We find that, under regulation, social media platforms are {incentivized} to add doses of content diversity. Put differently, when faced with a regulation, 
it is in the platform’s interest to ensure that the content it shows users is sufficiently diverse along the dimension by which $\fvec$ and $\fvec'$ for all $(\bx, \bx') \in \cS$ differ. 
Because content diversity is not a part of the regulatory test by design, this result is unexpected and suggests that content diversity plays a key role in aligning the interests of regulators and platforms.

All proofs are given in the Appendix as well as a toy example and further discussion of the audit. 

\section{Problem statement}\label{sec:problem_statement}

\subsection{System setup}

Consider a system with two agents: a social media platform and an auditor. 
\begin{enumerate}[topsep=0pt,itemsep=5pt]
	\item The \textbf{platform} selects the content that is shown to its users using a filtering algorithm $\cF : \cX \rightarrow \cZ$ 
	such that $Z = \cF (\fvec)$ is the feed produced by $\cF$ 
	given inputs $\fvec \in \cX$.
	Here, a feed is a collection of content that is shown to a user, and $\fvec$ captures any inputs that the platform uses to filter, 
	such as a user's interaction history, 
	the user's social network,
	the available content sources, 
	and so on. 
	Each feed $Z = \{ \bz_1 , \hdots , \bz_m\}$ consists of $m$ pieces of content, where $\bz_i \in \bbR^n$ for all $i \in [m]$.\footnote{
		Using a vector $\bz$ to represent each piece of content is not an addiitonal assumption because, if $\bz$ did not exist, then the platform would not be able to filter algorithmically.
	} 
	We assume that $\cF$ can be written as a {generative model}
	in which
	$\cF$ generates the content in $Z$ by drawing $m$ samples from a distribution $p_{\bz} ( \cdot \hspace{1pt} ; \genmodel ( \bx ) )$, 
	where $\genmodel (\bx) \in \parFam \in \bbR^r$ is unknown.\footnote{
		This representation is without loss of generality.
		For example, any deterministic mapping from $\fvec$ to $Z$ can be achieved by
		letting $r = nm$, $\genmodel(\bx) = (\bz_1^\top , \hdots , \bz_m^\top)$, and sequentially generating $\bz_i$ from the entries of $\genmodel(\bx)$. 
	}
	
	\item The \textbf{auditor} 
	is given a regulatory guideline that they wish to enforce. 
	The auditor's goal is to 
	check
	whether the platform's filtering algorithm $\cF$ is in compliance with the given regulation. 
	We assume that the auditor has black-box access to $\cF$.
	In other words, the auditor can run $\cF$ on a set of inputs $\{ \fvec_j \}$ and observe its outputs $\{ Z_j  = \cF(\fvec_j) \}$.
	Note that the inputs $\fvec$ and $\fvec'$ need not correspond to real users and could represent hypothetical users.
\end{enumerate}

In this work,
we restrict our analysis to \textbf{counterfactual regulations}.
Specifically,
the auditor is given a regulation in the form:
	``The filtering algorithm $\cF$ should behave similarly under inputs $\fvec$ and $\fvec'$ for all $(\fvec, \fvec') \in \cS$.''
Below, we give two examples of counterfactual regulations.

\begin{example}\label{ex:covid}
	Suppose that the regulation prohibits targeted advertisements that are based on a user's indicated sexual orientation \citep{wachter2020affinity}.
	This can be written as ``the advertisements shown by $\cF$ should be similar when given two users that are identical except for their sexual orientations'',
	and $\cS$ could be a set of pairs $(\fvec, \fvec')$, where $\fvec'$ differs from $\fvec$ only in the  (hypothetical)  users' sexual orientation.\footnote{\label{ft:sexual_orientation}
		This example is simplified in order to illustrate the meaning behind a counterfactual regulation. 
		If one seeks to protect against more nuanced effects, such as proxy variables,  
		one could modify not only a user's sexual orientation but also any proxy variables. 
		Producing counterfactual inputs is out of the scope of this work. 
		We direct interested readers to texts on causal inference \citep{pearl2016causal} and on causality and fairness \citep{hu2020s,mishler2021fairness,kasirzadeh2021use}.
	}
	
\end{example}

\begin{example}
	Suppose that the regulation requires that articles containing medical advice on COVID-19 are robust to whether the user is left- or right-leaning. 
	This can be framed as ``the articles that are selected by $\cF$ and provide medical advice on COVID-19 should be similar for left- and right-leaning users'',
	and $\cS$ could be a randomly generated set of 
	left- and right-leaning user pairs.
\end{example}

The goal of this work is to enforce a regulation of the form ``$\cF$ should behave similarly under $\fvec$ and $\fvec'$  for all $(\fvec, \fvec') \in \cS$''.
The question remains: \emph{What is an appropriate notion of ``similarity''?}

\subsection{Decision robustness}\label{sec:decision_robustness}

We begin by observing that algorithmic filtering is powerful (and often harmful) because \emph{information influences decisions}: 
the content that a user sees can affect how they vote, 
whether they get vaccinated,
what restaurants they frequent, 
what items they purchase, 
and more. 
Stated differently,
if algorithmic filtering did not influence users' decisions, 
then there would be no desire to regulate it. 

Therefore,
if one seeks to enforce a counterfactual regulation,
the notion of ``similarity'' that is enforced should be with respect to the outcome of interest:
the users' decisions.
However, an auditor does not and should not have access to the users or their decisions (e.g., whether they get vaccinated). 
As such, the problem that the auditor faces can be stated as follows. 

Suppose that there are two identical (hypothetical) users. 
One is shown $\cF(\fvec)$ and the other is shown $\cF(\fvec')$. 
Suppose both users are given an identical set of queries $\cQ$ (e.g., where to eat dinner, whether to get vaccinated, what to wear). 
Let $\cD$ and $\cD'$, respectively, denote the (hypothetical) decisions that the first and second users make given queries $\cQ$.

Then, the auditor enforces similarity by ensuring \textbf{decision robustness} as follows:
\begin{quote}
	$\cF$ is \emph{decision-robust} to $(\fvec, \fvec')$ 
	if and only if, for any $\cQ$, 
      one cannot determine with high confidence that $\fvec \neq \fvec'$ from the decisions $\cD$ and $\cD'$.
\end{quote}
Decision robustness guarantees that the decision-making behavior of any user under $\cF(\fvec)$ and $\cF(\fvec')$ is indistinguishable with respect to $(\fvec, \fvec')$. 
However, ensuring decision robustness is challenging because the auditor does not have access to users or their decisions. 
Our objective is to \emph{provide an auditing procedure that guarantees decision robustness given only $\cS$ and black-box access to $\cF$}.

\subsection{Formalizing the auditor's goal} \label{sec:formal_goal_H_test}

Recall that $\cF$ is decision-robust to $(\fvec, \fvec')$---and therefore complies with the regulation---when, for any $\cQ$, one cannot determine with high confidence that $\fvec \neq \fvec'$ from $\cD$ and $\cD'$. 
In this section, we show that decision robustness can be expressed as a \textbf{binary hypothesis test}. 

Formally, 
consider a pair of inputs $(\fvec, \fvec') \in \cS$ and set of queries $\cQ$. 
To ``determine whether $\fvec \neq \fvec'$ from $\cD$ and $\cD'$'' is equivalent to using $\cD$ and $\cD'$ to decide between the following hypotheses:
\begin{align}
	H_0 : \genmodel( \fvec ) = \genmodel( \fvec' ) \qquad H_1 : \genmodel ( \fvec ) \neq \genmodel( \fvec' ) \label{eq:H_test}
\end{align}
To see this equivalence, observe that we can write the Markov chain $\bx \rightarrow \genmodel (\bx) \rightarrow Z \rightarrow \cD$. 
In other words, decisions $\cD$ depend on inputs $\bx$ only through the parameters $\genmodel(\bx)$. 
If one cannot determine that $\genmodel(\bx) \neq \genmodel (\bx')$ from $\cD$ and $\cD'$, 
then one also cannot determine that $\bx \neq \bx'$.

Let $H \in \{ H_0, H_1 \}$ denote the true (unknown) hypothesis.\footnote{Although the auditor has access to $S$ and knows whether $\fvec \neq \fvec'$,
	decision-robustness requires that one cannot determine this fact from $\cD$ and $\cD'$. Therefore, the hypothesis test treats $\fvec$ and $\fvec'$ as unknown.
}
Let $\hat{H} \in \{ H_0, H_1 \}$ denote the hypothesis that is chosen, where
the outcome $\hat{H} = H_1$ is equivalent to determining that $\genmodel (\fvec) \neq \genmodel( \fvec')$. 
We say that a test $\hat{H}$ is \textbf{$(1 - \epsilon)$-confident} that $\genmodel (\fvec) \neq \genmodel( \fvec')$
if $\hat{H} = H_1$ 
and $\bbP(\hat{H} = H_1 | H = H_0) \leq \epsilon \in [0, 1]$.\footnote{
	$\bbP$ is taken with respect to $p_{\bz}( \cdot \hspace{1pt} ; \genmodel (\fvec) )$ and $p_{\bz}( \cdot \hspace{1pt}  ; \genmodel (\fvec') )$.	
}
While one would like the test to be {$(1 - \epsilon)$-confident}, a {trivial} test that always chooses $H_0$ is $(1 - \epsilon)$ confident but has $100$ percent error when $H = H_1$. 
Therefore, one would also like the test to satisfy: $\bbP(\hat{H} = H_0 | H = H_1) \leq \alpha$ for some small $\alpha \in [0, 1]$. 
However, not all $\alpha \in [0,1]$
may be achievable while retaining the property of {$(1 - \epsilon)$-confident} for a given $\epsilon$.

To this end,
we turn to the \textbf{uniformly most powerful unbiased (UMPU)} test \cite{casella2021statistical,lehmann2005testing}. 
One can think of it as follows:
{if the UMPU test cannot determine that $\genmodel (\fvec) \neq \genmodel( \fvec')$ with high confidence,
then no other reasonable test can. }
Formally, 
suppose that one would like to find a test that maximizes the true positive rate (TPR) while ensuring  the false positive rate (FPR) is at most $\epsilon$ such that the test solves:
\begin{align}
	\max {\hspace{-1.5pt}}_{\hat{H}} \hspace{2pt}  \bbP(\hat{H} = H_1 | H = H_1 )  \qquad \text{s.t.}  \qquad  \bbP(\hat{H} = H_1 | H = H_0) \leq \epsilon , \label{eq:ump}
\end{align}
If a test $\hat{H}$ solves \eqref{eq:ump} for all $\genmodel(\fvec), \genmodel(\fvec')  \in \parFam$,
then it is the \textbf{uniformly most powerful (UMP)} test.
The UMPU test is the UMP test among all unbiased tests, where a test $\hat{H}$ is \textbf{unbiased} if:
\begin{align*}
	\bbP(\hat{H} = H_1 | H = H_1) \geq \delta \geq \bbP(\hat{H} = H_1 | H = H_0)
\end{align*}
for some $\delta \in [0, 1]$.\footnote{Intuitively, an unbiased test $\hat{H}$ ensures that the probability that $\hat{H}$ chooses $\fvec \neq \fvec'$  is always higher when $H_1 : \fvec \neq \fvec$ is true than when $H_0 : \fvec = \fvec'$ is true. }
Let	$\hat{H}_\epsilon^*$ denote the UMPU test, if it exists, and $\alpha^*(\epsilon) =  \bbP(\hat{H}_\epsilon^* = H_0 | H = H_1 )$. 
Intuitively, $\hat{H}_\epsilon^*$ is the test that is best at detecting when $\cF$ is not decision-robust (i.e., it maximizes the TPR) 
while making sure that it rarely falsely accuses $\cF$ of not being decision-robust (i.e., its FPR is at most $\epsilon$) among all reliable (i.e., unbiased) tests. 
Given the UMPU test, 
{decision-robustness can be formalized} as follows. For $\epsilon, \alpha \in [0, 1]$ and when the UMPU test exists,
\begin{center}
	{$\cF$ is ($\epsilon, \alpha$)-\emph{decision-robust} to $(\fvec, \fvec')$ $\iff$ for any $Q$, $\bbP( \hat{H}_\epsilon^* = H_0 | H =H_1) \leq \alpha $.}
\end{center}

\textbf{The goal of the auditing procedure}.
Therefore, determining whether a platform's filtering algorithm $\cF$ complies with a counterfactual regulation comes down to determining whether, for all  $Q$ and $(\fvec, \fvec') \in \cS$,
the UMPU test cannot confidently reject $H_0$ given $\cD$ and $\cD'$.
However, this task is not straightforward because the auditor's goal is to provide a guarantee on how $\cF$ affects users' decisions
\emph{without} access to the users or their decisions (i.e., without $Q$, $\cD$, or $\cD'$).
In this work, 
we show that it is possible to guarantee approximate asymptotic decision-robustness \emph{given only $\cS$ and black-box access to $\cF$} using insights from statistical learning and decision theory.

\section{Auditing procedure}\label{sec:audit}

In this section,
we present a procedure such that, 
\emph{given a regulation on algorithmic filtering that is expressed in counterfactual form, an auditor can test whether the platform's filtering algorithm is in compliance with the regulation}. 
In Section \ref{sec:theory},
we show that, if $\cF$ passes the audit, then $\cF$ is approximately asymptotically decision-robust.
In Section \ref{sec:cost_of_reg}, 
we study the cost of regulation and find that there are conditions under which the audit does not place a performance cost on the platform.

\subsection{Notation and definitions}

Before proceeding, we require some notation and definitions.  
 Recall that
 $\cF$ generates $Z$ by drawing $m$ samples from $p_{\bz}(\cdot \hspace{1pt} ; \genmodel(\fvec))$, 
 where $\genmodel(\fvec)$ is unknown. 
 In statistical inference \cite{lehmann2006theory},
 an \emph{estimator} is a mapping $\cL : \cZ \rightarrow \parFam$ such that $\cL(Z)$ is an estimate of the parameters $\genmodel$ that generated $Z$. 

 \begin{definition}
 	An estimator $\cL :  \cZ \rightarrow \parFam$ is \emph{unbiased} 
 	if and only if $\bbE_{p_{\bz}( \cdot \hspace{1pt} ; \genmodel ) }[  \cL ( Z ) ] = \genmodel$ for all $\genmodel \in \parFam$.
 \end{definition}
 
 \begin{definition}
 	When it exists, 
 	the \emph{minimum-variance unbiased estimator (MVUE)}  $\cL^+ :  \cZ \rightarrow \parFam$  is an estimator that is unbiased and has the lowest variance among all unbiased estimators, i.e., satisfies $\bbE_{p_{\bz}( \cdot \hspace{1pt} ; \genmodel ) } [ ( \cL^+ ( Z ) - \genmodel )^2 ]  \leq \bbE_{p_{\bz}( \cdot \hspace{1pt} ; \genmodel ) } [ ( \cL ( Z ) - \genmodel )^2 ]$ for all unbiased $\cL : \cZ \rightarrow \parFam$ and all $\genmodel \in \parFam$.
 \end{definition}
Let $\chi^2_r$ denote the chi-squared distribution with $r$ degrees of freedom
and $\chi^2_r ( q )$ be defined such that $\bbP( v \leq \chi^2_r ( q ) ) = q$ where $v \sim \chi^2_r$. 
Lastly, let $I(\genmodel) \in \mathbb{R}^{r \times r}$ denote the Fisher information matrix at $\genmodel$. 
An exact definition of $I(\genmodel)$ is given in the Appendix. 
Intuitively, $I(\genmodel)$  captures how well an estimator can learn $\genmodel$ from $Z$. 
As a simple example,
suppose $\bz_i$ are drawn i.i.d. from $\cN(\mu, \sigma^2)$, where $\sigma^2$ is known, $r = 1$, and $\genmodel = \mu$.
In general, it takes more samples to accurately estimate $\mu$ when the variance $\sigma^2$ is large, 
and, as expected, the Fisher information scales with $1/\sigma^2$.

\subsection{The audit}

In this section, we present the auditing procedure. 
Recall that a counterfactual regulation requires that $\cF$ behave similarly under $\fvec$ and $\fvec'$ for all $(\fvec, \fvec') \in \cS$.
Algorithm \ref{alg:test} provides a test for determining whether $\cF$ complies with the given regulation  
for a pair of inputs $(\fvec, \fvec') \in \cS$. 
To test other pairs in $\cS$,
simply repeat Algorithm \ref{alg:test} and modify $\bx$ and $\bx'$ accordingly. 
If $\hat{H}_\epsilon = H_1$ for any pair, then the platform does not pass the audit. 
Below, we list and explain several characteristics of the audit. 

\textbf{Scalability}.
Algorithm \ref{alg:test} is intentionally designed to be scalable. 
Scalability allows the auditor to construct the audit as they wish. 
For example, the auditor may wish to add more pairs to $\cS$ or to repeat the audit at different times.
Alternatively, the auditor may require not that
$\cF$ behave similarly under $\fvec$ and $\fvec'$ for \emph{all} $(\fvec, \fvec') \in \cS$ but for at least $(1 - \alpha) \in [0, 1]$ of them.\footnote{Here, $\alpha \in [0, 1]$ would correspond to the maximum allowable false negative rate (FNR).} 
To do so, the auditor can run Algorithm \ref{alg:test} over $\cS$ and, if the number of times
$\hat{H}_\epsilon = H_1$ exceeds $\alpha |\cS| $,
then the platform does not pass the audit.
Scalability also allows the auditor to see the pairs $(\fvec, \fvec')$ for which $\cF$ fails the test.

{
\begin{algorithm}[t]
	\SetAlgoLined
	\KwIn{Regulation parameter $\epsilon \in [0,1]$;
		model family $\parFam \subset \mathbb{R}^r$;
		black-box access to the filtering algorithm $\cF : \cX \rightarrow \cZ$;
		a pair of counterfactual inputs $(\fvec, \fvec') \in \cS$.}
	\KwResult{$\hat{H}_\epsilon = H_0$ if the test is passed; $\hat{H}_\epsilon = H_1$, otherwise.}
	
	\vspace{2pt}
	
	$\usermodel \leftarrow \cL^+ ( \cF ( \bx ) )$\; 
	$\usermodel{}' \leftarrow \cL^+ ( \cF ( \bx' ) )$\; 
	
	\If{$(\usermodel - \usermodel{}')^\top I ( \usermodel ) (\usermodel - \usermodel{}' ) \geq \frac{2}{m} \chi^2_r ( 1 - \epsilon ) $ \label{lin:test}}{
		Return $\hat{H}_\epsilon = H_1$\;
	}
	
	Return $\hat{H}_\epsilon = H_0$\; 
	\caption{Scalable version of auditing procedure}\label{alg:test}
\end{algorithm}

}

\textbf{Tunable parameter}. 
One benefit of the procedure is that the tunable parameter $\epsilon$ has an intuitive meaning. 
We see in Section \ref{sec:guarantee} that $\epsilon$ is a maximum FPR.
Capping the FPR ensures that the auditor is not distracted by red herrings
and prevents the auditor from investing the resources needed to investigate (or bring a case against) the platform unless they are at least $(1 - \epsilon)$-confident that the platform violates the regulation.
Decreasing $\epsilon \in [0,1]$ reduces the number of false positives
while increasing $\epsilon$ makes the regulation more strict (at the risk of receiving more false positives). 

\textbf{Advantages}. 
In addition to the benefits regarding scalability and $\epsilon$ discussed above, 
this procedure has two additional advantages. 
First, 
the procedure does not require access to users or their personal data. 
Second, it requires only black-box access to $\cF$, 
which means that an auditor does not need to know the inner-workings of $\cF$ (there is often resistance to giving auditors full access to $\cF$) 
and, perhaps more importantly, the procedure works even when $\cF$ changes internally.

\textbf{When the MVUE does not exist, use the MLE}. 
Recall that $\cL^+$ denotes the MVUE. 
We will see that there is a theoretical justification for using the MVUE (Proposition \ref{prop:mvue}). 
However, there are cases in which the MVUE does not exist but the maximum likelihood estimator (MLE) does. 
The MLE is a good substitute for the MVUE because the MVUE and MLE are often asymptotically equivalent \cite{portnoy1977asymptotic}.
When this asymptotic equivalence holds, 
using the MLE for $\cL^+$ gives the same theoretical guarantees (namely, Theorem \ref{thm:reg_hyp_test}) as the MVUE. 

\textbf{Symmetry}.
Algorithm \ref{alg:test} is not symmetric with respect to $\genmodel$ and $\genmodel'$ (or, equivalently, $\fvec$ and $\fvec'$). 
This can be useful if the auditor would like to have a baseline input $\fvec$ and run Algorithm \ref{alg:test} over different $\fvec'$. 
If the auditor would like symmetry, 
they may wish to run Algorithm \ref{alg:test} twice, swapping the order of  $\fvec$ and $\fvec'$, or to alter the Fisher information matrix in Line \ref{lin:test} to be $I((\tilde{\genmodel} + \tilde{\genmodel}{}')/2)$, if it exists.

\textbf{Choice of $\parFam$}.
Recall that $\parFam$ captures the set of possible generative models.
In choosing the model family $\parFam$, the auditor may find that a simple $\parFam$ is more tractable and interpretable while a complex $\parFam$ is more general.
As explained in Section \ref{sec:insight}, $\parFam$ can also be viewed as the set of possible cognitive models that users employ when making decisions. 
Therefore, the auditor may wish to choose $\parFam$ to be just rich enough to mirror the complexity of common cognitive models. 
\vspace{-1pt}

\section{Explaining the procedure and its theoretical guarantees}\label{sec:theory}

\vspace{-1pt}

Recall from Section \ref{sec:problem_statement} that a filtering algorithm $\cF$ complies with a counterfactual regulation if $\cF$ is decision-robust. 
In Section \ref{sec:guarantee}, we show that, if the platform passes the audit in Algorithm \ref{alg:test}, 
then $\cF$ is guaranteed to be  approximately asymptotically decision-robust. 
In Section \ref{sec:insight}, we provide insights on the role of the MVUE. 
All proofs are given in the Appendix.

\vspace{-1pt}
\subsection{Guarantee on the audit's effectiveness}\label{sec:guarantee}
\vspace{-1pt}

\begin{theorem} \label{thm:reg_hyp_test}
	Consider \eqref{eq:H_test}.
	Let $\genmodel^* = (\genmodel(\fvec)  + \genmodel (\fvec')) / 2$. 
	Suppose that $\bz_i$ and $\bz'_i$ are drawn i.i.d. from $p(\cdot \hspace{1pt} ; \genmodel (\fvec) )$ and $p(\cdot \hspace{1pt} ; \genmodel (\fvec') )$, respectively, for all $i \in [m]$
	and $\cP = \{ p_{\bz} ( \cdot \hspace{1pt} ; \genmodel ) : \genmodel \in \parFam \}$ is a regular exponential family that meets the regularity conditions stated in Appendix \ref{app:tech_details}.
	If $\hat{H}$ is defined as:
	\begin{align}
		\hat{H} = H_1 
		\iff
		(\cL^+(Z)  - \cL^+(Z'))^\top I ( \genmodel^* ) (\cL^+(Z)  - \cL^+(Z'))  \geq \frac{2}{m} \chi_r^2 ( 1 - \epsilon ) , \label{eq:thm_H1_def}
	\end{align}
	then $\bbP( \hat{H} = H_{1} | H = H_{0} ) \leq \epsilon$ as $m \rightarrow \infty$. 
	If $r = 1$, then {$\lim_{m \rightarrow \infty} \bbP( \hat{H} = H_{0} | H = H_{1} ) = \alpha^*(\epsilon)$.}	
\end{theorem}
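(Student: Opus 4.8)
The plan is to reduce both claims to the asymptotic normality of the estimator $\cL^+$ and then read off the limiting distribution of the quadratic form appearing in \eqref{eq:thm_H1_def}. Because $\cP$ is a regular exponential family, the MVUE exists (it is a function of the complete sufficient statistic $\tfrac1m\sum_{i}t(\bz_i)$ by Lehmann--Scheff\'e), the Fisher information $I(\genmodel)$ is positive definite, and, by the cited asymptotic equivalence of the MVUE and the MLE \cite{portnoy1977asymptotic}, $\cL^+$ is consistent and asymptotically efficient: $\sqrt{m}\,(\cL^+(Z) - \genmodel(\fvec))$ converges in distribution to $\cN(0, I(\genmodel(\fvec))^{-1})$, where $I$ is the per-sample Fisher information, and the analogous statement holds for $\cL^+(Z')$, independently. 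I would first isolate this as a lemma, since every subsequent step is an application of it together with Slutsky's theorem and the continuous mapping theorem.

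For the size claim, note that under $H_0$ we have $\genmodel(\fvec)=\genmodel(\fvec')=:\genmodel$, so $\genmodel^*=\genmodel$ exactly and $I(\genmodel^*)=I(\genmodel)$. Writing $D=\cL^+(Z)-\cL^+(Z')$, independence and the lemma give that $\sqrt{m}\,D$ converges in distribution to $\cN(0,\,2\,I(\genmodel)^{-1})$. Factoring $I(\genmodel)^{1/2}$ out of the quadratic form (legitimate since $I(\genmodel)$ is positive definite), $m\,D^\top I(\genmodel^*) D$ is asymptotically distributed as $\|W\|^2$ with $W\sim\cN(0,2\,\mathbb{I}_r)$, i.e.\ as $2\chi^2_r$. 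Since the rejection event $\{\hat H = H_1\}$ is exactly $\{\,m\,D^\top I(\genmodel^*) D \ge 2\chi^2_r(1-\epsilon)\,\}$, the continuous mapping theorem gives $\bbP(\hat H = H_1 \mid H_0) \to \bbP(2\chi^2_r \ge 2\chi^2_r(1-\epsilon)) = \bbP(v \ge \chi^2_r(1-\epsilon)) = \epsilon$, by the definition of the quantile $\chi^2_r(1-\epsilon)$. This yields the claimed bound $\bbP(\hat H = H_1 \mid H_0)\le\epsilon$ in the limit. (If one uses the algorithm's estimated matrix $I(\usermodel)$ from Line \ref{lin:test} in place of $I(\genmodel^*)$, consistency of $\cL^+$ and continuity of $I(\cdot)$ let Slutsky absorb the difference.)

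For the power claim I specialize to $r=1$, since two-sided multivariate alternatives admit no uniformly most powerful unbiased test in general, so a comparison with $\hat H_\epsilon^*$ is only meaningful when $\genmodel$ is scalar. In this case the equality problem $H_0:\genmodel(\fvec)=\genmodel(\fvec')$ lives in a two-parameter exponential family; reparametrizing by the contrast $\genmodel(\fvec)-\genmodel(\fvec')$ and the nuisance sum, the UMPU test $\hat H_\epsilon^*$ solving \eqref{eq:ump} is the classical conditional test obtained by conditioning on the complete sufficient statistic for the nuisance parameter \cite{lehmann2005testing}. The strategy is to show that the test in \eqref{eq:thm_H1_def}, which for $r=1$ reduces to thresholding $|\cL^+(Z)-\cL^+(Z')|\,\sqrt{I(\genmodel^*)}$, is asymptotically equivalent to $\hat H_\epsilon^*$: both are asymptotically equivalent to the score test for the contrast parameter, so their rejection regions agree up to events of probability tending to zero, whence their type-II error probabilities share the same limit, namely $\alpha^*(\epsilon)$. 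The main obstacle is precisely this asymptotic equivalence: one must track the conditional UMPU critical function through its normal approximation and match it to the Wald-type statistic built from $\cL^+$, while being careful about the sense in which the limit is taken (the content of the claim is that $\bbP(\hat H = H_0\mid H_1)-\alpha^*(\epsilon)\to 0$, with $\alpha^*(\epsilon)$ the UMPU type-II error at sample size $m$). Establishing this matching, rather than the routine distributional computations of the first two paragraphs, is where the real work lies.
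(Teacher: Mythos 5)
Your treatment of the size claim is correct and is essentially the paper's own argument: asymptotic normality and efficiency of the MVUE (via the regularity conditions and the cited equivalence with the MLE), independence of $Z$ and $Z'$, the observation that under $H_0$ one has $\genmodel(\fvec)=\genmodel(\fvec')=\genmodel^*$ so that $\sqrt{m}\,(\cL^+(Z)-\cL^+(Z'))$ is asymptotically $\cN(\mathbf{0}_r,\,2I^{-1}(\genmodel^*))$, and hence the quadratic form is asymptotically $\tfrac{2}{m}\chi^2_r$, giving the FPR bound. Your parenthetical about absorbing the plug-in $I(\usermodel)$ via Slutsky is a point the paper handles only informally (it asserts the test in \eqref{eq:thm_H1_def} is ``asymptotically equivalent'' to Algorithm \ref{alg:test}), so making it explicit is a small improvement.

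For the $r=1$ power claim you take a genuinely different, and considerably harder, route than the paper. The paper does not attempt to match the finite-sample conditional UMPU test of the exponential family to the Wald statistic; it simply observes that in the limit the problem becomes a two-sample test on the mean of a univariate Gaussian with known variance, for which the symmetric two-sided test at level $\epsilon$ is the classical UMPU test (citing Casella--Berger, Section 8.3), and reads off $\lim_m \bbP(\hat H = H_0\mid H_1)=\alpha^*(\epsilon)$ in that limiting experiment. Your plan --- reparametrize by the contrast, condition on the complete sufficient statistic for the nuisance parameter, and prove asymptotic equivalence of the conditional UMPU critical function with the score/Wald statistic --- would, if carried out, yield a stronger and more honest statement (it pins down the sense in which $\alpha^*(\epsilon)$ is the UMPU type-II error at finite $m$ rather than in the Gaussian limit), but as written it is a program, not a proof: you explicitly defer the matching step, which is exactly the content of the claim. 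If you want to match the paper's level of rigor with less work, replace that paragraph with the passage to the limiting Gaussian experiment and the classical UMPU fact; if you want to keep your route, the deferred equivalence must actually be established. One further omission relative to the paper: the paper's proof closes with a data-processing-inequality argument connecting the test on $(Z,Z')$ back to the decisions $(\cD,\cD')$ for arbitrary $\cQ$; this is not needed for the literal statement of Theorem \ref{thm:reg_hyp_test}, but it is how the paper justifies that the theorem delivers decision robustness, so it is worth being aware that your proof covers only the displayed hypothesis test.
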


\vspace{-1pt}
\textbf{Understanding the result}.
Recall that the goal of an auditor is to determine whether the platform's filtering algorithm $\cF$ is compliant with a given regulation by determining whether $\cF$ is decision-robust. 
Theorem \ref{thm:reg_hyp_test} confirms that the audit in Algorithm \ref{alg:test} enforces approximate asymptotic decision robustness. 
To see this connection, observe that the test in \eqref{eq:thm_H1_def} is identical to the test in Algorithm \ref{alg:test} 
with one substitution---$\cL^+ ( Z )$ is replaced by $\genmodel^*$---which 
implies that the test in \eqref{eq:thm_H1_def} is asymptotically equivalent to the audit.
Therefore, Theorem \ref{thm:reg_hyp_test} establishes that, if Algorithm \ref{alg:test} returns $\hat{H}_\epsilon = H_1$, 
then the auditor is $(1 - \epsilon)$-confident that $\cF$ is not decision-robust as $m \rightarrow \infty$.\footnote{
	We say that, if $\cF$ passes the audit, it is \emph{approximately} decision-robust because $\hat{H}_\epsilon$ is not the UMPU test, as defined in Section \ref{sec:problem_statement}. Obtaining a UMPU test is difficult for $r >1$, but the test $\hat{H}_\epsilon$ is not far from the UMPU test, as demonstrated by the fact that it is the UMPU test when $r = 1$. 
}

Intuitively,
if the platform passes the audit in Algorithm \ref{alg:test}, 
then the decision-making of any user that is shown $\cF(\bx)$ is guaranteed to be $\epsilon$-indistinguishable from their decision-making should they have been shown $\cF(\bx')$ instead.
Importantly, this guarantee is provided \emph{without} access to users or their decisions with the help of insights from statistical learning theory (see proof of Theorem \ref{thm:reg_hyp_test}).

\textbf{A few remarks}.
First, $\epsilon$ is a false positive rate (FPR). 
Decreasing $\epsilon$ increases the confidence that 
the auditor would like to have should it pursue action against the platform (see Section \ref{sec:audit}). 
Second, recall from Section \ref{sec:audit} that Algorithm \ref{alg:test} is modular and that, in most cases, 
one may wish to repeat it several times with different inputs. In that case, the result of Theorem \ref{thm:reg_hyp_test} holds for each individual run. 
Lastly, Theorem \ref{thm:reg_hyp_test} provides an asymptotic guarantee on the audit. 
In the next section, we show that, even for finite $m$, 
Algorithm \ref{alg:test} ensures a notion of decision robustness
by using the MVUE. 

\vspace{-1pt}
\subsection{Insight on the MVUE}\label{sec:insight}

Recall that the auditor's goal is to provide a guarantee with respect to users' decisions,
but the auditor does not have access to the users or their decisions. 
In this section, we provide  intuition for how the auditor enforces decision robustness without this information,
and we explain how the use of the MVUE in Algorithm \ref{alg:test} allows the auditor to enforce a notion of decision robustness for finite $m$. 
One can think of the MVUE as providing an ``upper bound'' on how much content $Z$ can influence a user's decisions.
This result is useful because it allows the auditor to reason about users' decisions without access to users or their decisions, both which can be expensive or unethical to obtain. 

\textbf{User model}. 
A user's decision-making process proceeds in three steps: 
the user observes information, updates their internal belief, then uses this belief to make a decision.
We model these steps as follows. 
Let $p_{\bz}(\cdot \hspace{1pt} ; \hat{\genmodel})$ denote the belief of a (hypothetical) user who is shown a feed $Z$, where $\hat{\genmodel} \in \parFam$.\footnote{The use of distributions to represent beliefs is common in the cognitive sciences \cite{chater2006probabilistic}. Note that, although representing beliefs as distributions is borrowed from Bayesian inference \cite{bernardo2009bayesian}, our representation does \emph{not} require that people are Bayesian (update their beliefs according to Bayes' rule), which is highly contested \cite{eberhardt2011confirmation}.}
Let the influence that $Z$ has on a user's belief be denoted by $\cL : \cZ \rightarrow \parFam$ 
such that $\hat{\genmodel} = \cL ( Z  )$.\footnote{A user's belief may be impacted by information other than $Z$ (e.g., the user's previous belief, news that they receive from friends, or content that they view on other platforms). This can be modeled by letting $\cL : \cZ \times \cJ \rightarrow \parFam$ where $J \in \cJ$ captures off-platform information. Because it does not change our results (see the Appendix), 
	we use $\cL : \cZ \rightarrow \parFam$ for notational simplicity.\label{ft:more_info_estimator} }
\begin{example}\label{ex:vaccine}
	As a highly simplified example, 
	suppose $\parFam = [-1, 1] \times [0, \infty)$ 
	and $p_{\bz}(\cdot ; \hat{\genmodel})$ is a Gaussian distribution with mean $\hat{\theta}_1$ and variance $\hat{\theta}_2$, 
	where $\hat{\theta}_1 = -1$ implies that the user does not believe that vaccines are effective, 
	$\hat{\theta}_1 = 1$ implies the opposite, 
	and $\hat{\theta}_2$ scales with the user's uncertainty in their belief. 
	If a user is easily influenced, 
	then 
	they might develop the belief $\cL(Z) = (-0.8, 0.1)$
	after being shown a feed $Z$ with anti-vaccine content.
	Alternatively, 
	the user could be confidently pro-vaccine 
	and very stubborn so that no matter what content they see,
	$\cL(\cdot) = (1, 0)$. 
\end{example}
Suppose the user is given a query $Q \in \cQ$, for which the user decides between two options: $A_0$ and $A_1$, e.g., whether or not to get vaccinated.
(Note that any decision between a finite number of options can be written as a series of binary decisions.)
Internally, the user places a value on each choice
such that, if the user were given $\genmodel$, the user would
choose $A_0$ if $v_0(\genmodel) \geq v_1(\genmodel)$ and $A_1$, otherwise. 

\begin{example}
	In microeconomic's utility theory, 
	$\genmodel$ would be an individual's preferences, 
	$v_i(\genmodel)$ would be the utility of choice $i$ under these preferences.
\end{example}
However, the user does not know $\genmodel$. 
They have a belief $\hat{\genmodel}$ that they use to infer whether $v_0( \genmodel ) \geq v_1(\genmodel)$.
In other words, the user's decision-making process is effectively a hypothesis test between: 
\begin{align}
	G_0 : v_0( \genmodel ) \geq v_1(\genmodel) \qquad G_1 : v_1 (\genmodel) > v_0 ( \genmodel ) . \label{eq:user_dm}
\end{align}
The following result motivates the use of the MVUE $\cL^+$ in the audit
by demonstrating that the MVUE enforces a finite-sample version of decision robustness. 

\begin{proposition}\label{prop:mvue}		
		Consider \eqref{eq:user_dm}. 
		Let $G\in \{ G_0 , G_1 \}$ denote the true (unknown) hypothesis.
		Suppose that $v_0 , v_1 : \parFam \rightarrow \bbR$ are affine mappings
		and there exists $\bu : \bbR^n \rightarrow \mathcal{U}$ such that, for $\cF(\bx) = \{ \bz_1, \hdots, \bz_m \}$, one can write $\bu(\bz_i) \sim p_{\bu}( \cdot ; \hspace{1pt} v_1(\genmodel(\bx)) - v_0( \genmodel(\bx) ))$ for all $i \in [m]$.
		Then, if the UMP test with a maximum FPR of $\rho$ exists, 
		it is given by the following decision rule: reject $G_0$ 
		(choose $A_1$) 
		when the minimum-variance unbiased estimate $\usermodel = \cL^+ (Z)$ satisfies $v_1( \usermodel )  - v_0 (\usermodel ) > \eta_\rho$ where $\bbP( v_1 ( \usermodel ) - v_0 ( \usermodel  ) > \eta_\rho | G= G_0 ) = \rho$;
		otherwise, accept $G_0$ (choose $A_0$). 
\end{proposition}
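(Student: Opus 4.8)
The plan is to reduce the $r$-dimensional test to a one-parameter problem in the scalar contrast $\psi := v_1(\genmodel) - v_0(\genmodel)$, apply the classical theory of one-sided tests in exponential families, and then translate the resulting sufficient-statistic threshold into the stated threshold on $v_1(\usermodel) - v_0(\usermodel)$. Since $v_0, v_1$ are affine, the hypotheses in \eqref{eq:user_dm} depend on $\genmodel$ only through $\psi$: the event $G_0$ is exactly $\{ \psi \le 0 \}$ and $G_1$ is exactly $\{ \psi > 0 \}$. By the stated assumption, the transformed observations $\bu(\bz_1), \ldots, \bu(\bz_m)$ are i.i.d. from the one-parameter family $p_\bu(\cdot\hspace{1pt}; \psi)$, so the data relevant to deciding between $G_0$ and $G_1$ is a sample indexed by the single real parameter $\psi$, with the null and alternative separated by the point $\psi = 0$.

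Next I would use the exponential-family structure of the setting: the one-parameter family $\{ p_\bu(\cdot\hspace{1pt}; \psi) \}$ carries a natural sufficient statistic $T = \sum_{i=1}^m \phi(\bu(\bz_i))$ in which the likelihood ratio is monotone. By the Karlin--Rubin theorem, the one-sided problem $\psi \le 0$ versus $\psi > 0$ then admits a UMP test of any given size, and---when such a test exists, as hypothesized---it must reject $G_0$ precisely when $T$ exceeds a constant $c$. Calibrating $c$ so that the rejection probability equals $\rho$ at the boundary $\psi = 0$, together with monotonicity of the power function in $\psi$, caps the FPR at $\rho$ over the entire composite null $\psi \le 0$.

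The remaining and most delicate step is to identify this threshold on $T$ with the stated threshold on $v_1(\usermodel) - v_0(\usermodel)$. Because $v_0, v_1$ are affine and $\usermodel = \cL^+(Z)$ is unbiased for $\genmodel$, the contrast $v_1(\usermodel) - v_0(\usermodel)$ is unbiased for $\psi$; being a function of the complete sufficient statistic, it is identified by Lehmann--Scheff\'e as the MVUE of $\psi$, which in a regular exponential family is a strictly increasing function of $T$. Hence the event $\{ T > c \}$ coincides with $\{ v_1(\usermodel) - v_0(\usermodel) > \eta_\rho \}$ for the matching threshold $\eta_\rho$, and imposing $\bbP( v_1(\usermodel) - v_0(\usermodel) > \eta_\rho \mid G = G_0 ) = \rho$ reproduces the calibration above, yielding exactly the claimed decision rule. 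I expect the crux to lie precisely here: verifying that the MVUE of the scalar $\psi$, assembled from the vector estimator $\usermodel$ of $\genmodel$, is a strictly monotone reparametrization of the one-dimensional sufficient statistic $T$ of the reduced family, so that the two threshold rules literally coincide. The affineness of $v_0, v_1$ and the completeness and monotonicity afforded by the regular exponential family are exactly what make this identification go through and rule out the degenerate case in which the contrast fails to be strictly increasing in $T$.
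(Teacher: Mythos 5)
Your proposal reaches the right conclusion but by a genuinely different route, and the difference matters for generality. The paper's proof is essentially a two-line reduction to an external result: it quotes Theorem 1 of Ghobadzadeh et al., which states that for a one-sided composite test on a scalar parameter $\gamma$, \emph{if} the UMP test of size $\rho$ exists, \emph{then} it must be the rule that thresholds the MVUE of $\gamma$. The paper then observes that (i) because $v_0, v_1$ are affine, $v_1(\usermodel) - v_0(\usermodel)$ is the MVUE of $v_1(\genmodel) - v_0(\genmodel)$ (an unbiased function of the complete sufficient statistic is the unique MVUE of its expectation), and (ii) the setting matches the lemma under the substitutions $\gamma = v_1(\genmodel) - v_0(\genmodel)$, $\gamma_b = 0$, $\bw = \bu(\bz)$. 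No monotone-likelihood-ratio or exponential-family structure on the reduced family is invoked; the conclusion is conditional on existence of the UMP test, exactly as the proposition is phrased.

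You instead re-derive the key lemma from scratch: reduce to the scalar contrast $\psi$, apply Karlin--Rubin to get a UMP test thresholding a sufficient statistic $T$, and then argue via Lehmann--Scheff\'e that the MVUE of $\psi$ is a strictly increasing function of $T$. This is a legitimate and more self-contained argument \emph{when it applies}, and it additionally yields existence of the UMP test rather than assuming it. But it rests on an assumption the proposition does not make: that the induced one-parameter family $\{ p_{\bu}(\cdot \hspace{1pt} ; \psi) \}$ is a (regular, complete) exponential family with MLR in a one-dimensional sufficient statistic $T$. Even though the paper elsewhere assumes $\cP = \{ p_{\bz}(\cdot \hspace{1pt}; \genmodel) \}$ is a regular exponential family in $\genmodel \in \parFam \subset \bbR^r$, the pushforward under $\bu$ reparametrized by the scalar $\psi$ need not inherit that structure, so your argument proves a narrower statement than the one claimed. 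Relatedly, your final identification step silently assumes that the complete sufficient statistic underlying $\usermodel = \cL^+(Z)$ collapses, through the affine contrast, onto a strictly monotone function of $T$; you correctly flag this as the crux, but it is exactly the point that the citation to Ghobadzadeh et al. lets the paper bypass. If you want a self-contained proof, you should either add the MLR/exponential-family hypothesis on $p_{\bu}$ explicitly, or reproduce the argument of the cited theorem, which works directly from the assumed existence of the UMP test.
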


\textbf{Interpretation and implications}. 
 The auditor is interested in how users react to their content, which is captured by $\cL$. 
 However, $\cL$ may  difficult or even unethical to obtain.
For example, an auditor may wish to infer how advertisements affects a user's behavior, 
but doing so may require access to the user's personal data. 
Proposition \ref{prop:mvue} says that, under the stated conditions, if one wishes to study the impact of content on users' decisions, 
one can focus on 
the MVUE $\cL^+$
because, among all possible users, 
{the one whose decisions are most influenced by their content is the hypothetical user given by} the $\cL^+$. 
One can think of this hypothetical user as the ``most gullible user''.
Recall from Line \ref{lin:test} that Algorithm \ref{alg:test} requires $\cL^+(Z)$  and  $\cL^+(Z')$ to be sufficiently close in order for $\cF$ to pass the audit.
In combination, these observations demonstrate that the audit enforces a notion of decision robustness by ensuring that
the counterfactual beliefs $\cL^+(Z)$  and  $\cL^+(Z')$---and therefore the counterfactual decisions $\cD$ and $\cD'$---of the most gullible user are indistinguishable.

\textbf{Understanding the MVUE}. 
For intuition on why the MVUE is the ``most gullible user'',
recall that the MVUE is the unbiased estimator with the lowest variance. 
Suppose that a user's estimate  $\cL( Z)$ differs from $\cL^+( Z)$.
By definition, this estimate is biased or has higher variance.
When biased, the user's estimate is consistently pulled by some factor other than $Z$.
For example, a user who remains pro-vaccine no matter what content they see has a biased estimator. 
When the user's estimate has higher variance than the MVUE's,
it is an indication that the user places less confidence than the MVUE in what they glean from $Z$.
For example, the user could be skeptical of what they see on social media 
or scrolling very quickly and only reading headlines.
In this way, the MVUE corresponds to the user the user who ``hangs on every word''---whose decisions are most affected by their content $Z$. 

\section{Cost of regulation and the role of content diversity} \label{sec:cost_of_reg}

In this section, we turn our attention to how the auditing procedure affects
(a) the platform's ability to maximize an objective function $R$
and
(b) the type of content the platform is incentivized to filter when compliant with a regulation. 
In Section \ref{sec:cost}, 
we find that there are conditions under which the audit does not place a performance cost on the platform
and, intuitively, this occurs when the platform has enough degrees of freedom with which to filter.
We show in Section \ref{sec:diversity} that one of the ways the platform can increase $R$ while complying with the regulation is to add sufficient content diversity to users' feeds. 
Because diversity does not appear in the audit by design, 
this result suggests that content diversity can align the interests of regulators and platforms. 
All proofs are given in the Appendix.

\subsection{Cost of regulation}\label{sec:cost}

Suppose that the platform's goal is to maximize an objective function $R : \cZ \times \cX \rightarrow \bbR$---which we call \emph{reward}---while passing the audit.
For example, $R$ could be a measure of user engagement, 
user satisfaction, 
content novelty, 
or a combination of these and other factors. 
We leave $R$ unspecified, which means that our analysis holds for any choice of  $R$, unless otherwise stated.\footnote{There are settings in which $R$ is time-varying, e.g., when a platform's sources of revenue change with time. Making $R$ time-varying does not change our analysis or findings. 
	Therefore, we leave $R$ static for simplicity. \label{ft:static_R}
} 

Recall that $\cZ$ denotes the set of all possible feeds. 
Complying with a counterfactual regulation defined by $\cS$ is equivalent to restricting the platform's choice of feeds 
to a subset $\cZ(\cS) \subset \cZ$, which we call the \textbf{feasible set} under $\cS$.  
Intuitively, 
the stricter the regulation, the smaller the feasible set $\cZ(\cS) $.
If there is no regulation, then $\cZ(\cS) = \cZ$.
As such,
the platform's goal to maximize $R$ given inputs $\fvec$ while complying with the regulation can be expressed as:
\begin{align*}
	Z \in \arg \max_{W \in \cZ(\cS)} R(W, \fvec) . 
\end{align*}
Platforms are often interested in whether a regulation imposes a performance cost. 
To make this notion precise, we define the \textbf{cost of regulation} as follows. 
\begin{definition}
	The \emph{cost of regulation} for inputs $\bx$ is: $C =  \max_{W \in \cZ} R(W, \bx) - \max_{W \in \cZ(\cS)}  R(W, \bx)$. 
\end{definition}
A low cost of regulation implies that the platform can meet regulation without sacrificing much reward,
while a high cost of regulation implies that there is a strong performance-regulation trade-off.

\textbf{Performance-regulation trade-off}. 
Suppose that the feasible set shrinks from $\cZ$ to $\cZ(\cS)$.
Then, the maximum achievable reward
is affected in one of two ways. 
If the feasible set shrinks such that all reward-maximizing solutions in $\cZ$ are not contained in $\cZ(\cS)$, 
then the maximum reward \emph{decreases}, and the cost of regulation \emph{increases}. 
Alternatively, if the feasible set shrinks but at least one reward-maximizing solution in $\cZ$ is contained in $\cZ(\cS)$, then the maximum reward stays the same, and the cost of regulation \emph{does not increase}. 
Therefore, as long as $\cZ(\cS)$ preserves at least one (near) optimal solution, the cost of regulation is low.
The following result formalizes this notion. 
For this result, we overload the notation $R$ such that $R(\cL^+(Z) , \bx)$ denotes $R( Z, \bx)$.

\begin{theorem}\label{thm:cost_of_reg}
	Suppose there exists $\Omega \subset [r]$ where $1 < |\Omega| < r$
	such that
	$R(\genmodel_1, \bx) = R(\genmodel_2, \bx)$ if $\theta_{1,i} = \theta_{2,i}$ for all $i \notin \Omega$.
	Suppose that, for any $\genmodel \in \parFam$, $\beta > 0$ and $\bv \in \bbR^r$, 
	there exist a vector $\bar{\genmodel}_{\Omega}$ where $\bar{\theta}_{\Omega,i} = 0$ for all $i \notin \Omega$
	and a constant $\kappa > 0$ such that $\bv^\top I ( \genmodel + \kappa \bar{\genmodel}_{\Omega} ) \bv < \beta$ and $\genmodel + \kappa \bar{\genmodel}_{\Omega} \in \parFam$. 
	Then, if $m < \infty$, 
	there exists a set $\cZ$ such that the cost of regulation for $\bx$ under Algorithm \ref{alg:test} is $0$. 
\end{theorem}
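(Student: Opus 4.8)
The plan is to work in parameter space via the overloading $R(\cL^+(Z),\bx)=R(Z,\bx)$ and to exhibit a single audit-passing parameter that attains the unconstrained optimal reward at $\bx$. The first hypothesis does the bookkeeping: since $R(\genmodel_1,\bx)=R(\genmodel_2,\bx)$ whenever $\genmodel_1$ and $\genmodel_2$ agree on $[r]\setminus\Omega$, the reward depends only on the coordinates outside $\Omega$, so the coordinates in $\Omega$ are free slack that can be moved without changing reward. First I would take $\usermodel$ to be reward-optimal for $\bx$ and $\usermodel'$ to be the parameter the platform uses for the counterfactual $\bx'$ (e.g.\ reward-optimal for $\bx'$), and---since the $\Omega$-coordinates are immaterial to reward---choose their $\Omega$-coordinates equal, so that the difference $\bv=\usermodel-\usermodel'$ is supported on $[r]\setminus\Omega$. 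By Line \ref{lin:test}, the audit then passes precisely when $\bv^\top I(\usermodel)\bv<\frac{2}{m}\chi^2_r(1-\epsilon)$.

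The crux is to make this quadratic form small without disturbing either the reward or the gap $\bv$, which is where the second hypothesis enters. Because $m<\infty$ and $\epsilon\in(0,1)$, the threshold $\beta:=\frac{2}{m}\chi^2_r(1-\epsilon)$ is a fixed positive constant. Applying the Fisher hypothesis at the base point $\usermodel$ with this $\beta$ and direction $\bv$ yields a vector $\bar{\genmodel}_{\Omega}$ supported on $\Omega$ and a scalar $\kappa>0$ with $\bv^\top I(\usermodel+\kappa\bar{\genmodel}_{\Omega})\bv<\beta$ and $\usermodel+\kappa\bar{\genmodel}_{\Omega}\in\parFam$. I would then shift both parameters by the same $\Omega$-supported increment, setting $\tilde{\usermodel}=\usermodel+\kappa\bar{\genmodel}_{\Omega}$ and $\tilde{\usermodel}{}'=\usermodel'+\kappa\bar{\genmodel}_{\Omega}$. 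This shift leaves the reward unchanged (it moves only $\Omega$-coordinates, so $R(\tilde{\usermodel},\bx)=R(\usermodel,\bx)=\max_{\genmodel}R(\genmodel,\bx)$ by the first hypothesis) and leaves the difference unchanged ($\tilde{\usermodel}-\tilde{\usermodel}{}'=\bv$), while the test statistic becomes $\bv^\top I(\tilde{\usermodel})\bv<\beta$, so the pair strictly passes the audit.

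Finally I would assemble $\cZ$: take it to consist of feeds whose $\cL^+$-parameters realize $\tilde{\usermodel},\tilde{\usermodel}{}'$ together with a family on which $R(\cdot,\bx)$ is maximized at the value $R(\tilde{\usermodel},\bx)$, so that $\max_{W\in\cZ}R(W,\bx)=R(\tilde{\usermodel},\bx)$ while $\tilde{\usermodel}$ already lies in the feasible set $\cZ(\cS)$; this forces $C=0$. The main obstacle is the simultaneous, single-shift coordination of three requirements---preserving reward, preserving the gap $\bv$, and driving the Fisher quadratic form below the positive threshold---which is exactly why the increment must be supported on $\Omega$ (the free coordinates) and why finiteness of $m$ is essential to keep $\beta>0$ (as $m\to\infty$ the threshold vanishes and the argument breaks). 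A secondary technical point I would discharge is that the \emph{other} shifted parameter $\tilde{\usermodel}{}'$ also remains in $\parFam$ and is realizable as a feed, since the Fisher hypothesis is stated only for the single base point $\usermodel$; I would handle this either by invoking the same hypothesis at $\usermodel'$ to secure membership, or by restricting $\cZ$ to a region of $\parFam$ on which the common $\Omega$-shift is admissible for both coordinates, using the unbiasedness of $\cL^+$ to produce feeds attaining the desired parameters.
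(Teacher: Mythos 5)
Your proof is correct and follows essentially the same route as the paper's: pick the reward-maximizing parameters for $\bx$ and $\bx'$, invoke the Fisher-information hypothesis at the optimal parameter with $\beta = \frac{2}{m}\chi^2_r(1-\epsilon)$ and $\bv$ equal to their difference, shift both parameters by the same $\Omega$-supported increment so that the reward and the gap are preserved while the quadratic form drops below the threshold, and require $\cZ$ to be expressive enough to realize the shifted parameters. Your closing observation that membership of the shifted counterfactual parameter $\tilde{\usermodel}{}'$ in $\parFam$ needs separate justification is a point the paper's own proof glosses over, so flagging it is a welcome extra bit of care.
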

\textbf{Interpretation of the result}.
Imposing a regulation restricts the platform's feasible set, which may place a performance cost on the platform.
However, a high cost of regulation is not inevitable. Indeed, if the feasible set contains at least one (near) optimal solution, then the cost of regulation is low. 
Theorem \ref{thm:cost_of_reg} provides a set of conditions under which the cost of regulation is low. 
Intuitively, the result states that, when $R$ is independent of at least one element in the parameter vector $\genmodel$ and that element has sufficient leverage over the Fisher information, 
then as long as the amount of content in a given feed is finite (i.e., $m < \infty$)
and the available content $\cZ$ is expressive enough, 
then the platform can always construct a feed from $\cZ$ that passes the audit without sacrificing reward. 

One may ask whether the conditions in Theorem \ref{thm:cost_of_reg} are feasible. 
To illustrate that the conditions are achievable, consider the following highly simplified example. 
\begin{example}\label{ex:gaussian}
	Suppose $\genmodel = (\mu , \sigma^2)$, $\parFam = \bbR \times \bbR_{\geq 0}$, 
	and $\cP = \{ \mathcal{N}(\mu , \sigma^2) : (\mu , \sigma^2) \in \parFam \}$. 
	In other words, we consider the family of 1-D Gaussian distributions. 
	If $R$ is a function of the mean $\mu$ but not the variance $\sigma^2$, 
	then Theorem \ref{thm:cost_of_reg} applies. 
	To see this, observe that $\Omega = \{ 2 \}$ and
	 let $\bar{\genmodel}_{\Omega} = (0 , 1)$.
	 Recalling that $I	( (\mu, \sigma^2) ) = \emph{diag}( \sigma^{-2} , \sigma^{-4}  / 2)$, 
	 the entries of $I(\genmodel)$ can be made arbitrarily small by increasing $\theta_2 = \sigma^2$. 
	 By Line \ref{lin:test} in Algorithm \ref{alg:test}, 
	 the smaller the entries of $I(\genmodel)$, the easier it is for the platform to pass the audit. 
	 Therefore, if the platform  has a high-reward feed $Z^*$ that is not in the feasible set, 
	 the platform can still achieve $R(Z^*, \bx)$ by increasing the feed's variance to obtain a new feed $Z$ that is in the feasible set. 
	 As long as $Z$ and $Z^*$ share $\mu$, $R(Z , \bx) = R(Z^*, \bx)$.
\end{example}

Theorem \ref{thm:cost_of_reg} provides conditions under which there is \emph{no cost of regulation}. 
These conditions can be relaxed if we are interested in scenarios for which the cost of regulation is low but not zero. 
We build this intuition in the next section, 
studying one of the ways a platform can achieve high reward while remaining compliant with regulation.

\subsection{Content diversity}\label{sec:diversity}

In this section, 
we show that one way that the platform can increases its reward while complying with the regulation is to 
ensure that the feeds have a sufficient amount of content diversity. 
These results suggest that \emph{content diversity may help to align the interests of regulators and platforms}.

We first formalize content diversity, then show how it relates to passing the proposed audit. 
\begin{definition}\label{def:diversity}
	For $\bv \in \bbR^r$ and $Z_0, Z_1 \in \cZ$, 
	feed $Z_0$ has greater \emph{content diversity} than $Z_1$ along $\bv$ if the Fisher information matrices at $\cL^+(Z_0)$ and $\cL^+(Z_1)$ satisfy:
	$\bv^\top (  I (\cL^+(Z_1)) - I (\cL^+(Z_0) ) \bv > 0$. 
\end{definition}

\textbf{Interpretation}. 
This definition says that the ``smaller'' the Fisher information,
the higher the content diversity. 
The Fisher information matrix $I(\genmodel)$ can be viewed as 
a measure of how easy it is to learn $\genmodel$ from a feed $Z$ that is generated by $p_{\bz}(\cdot \hspace{1pt} ; \genmodel)$. 
Consequently,
when $Z$ and $Z'$ have low content diversity along $(\genmodel(\bx) - \genmodel(\bx'))$,
an auditor can, without much effort,
learn that $\genmodel(\bx)$ is different from $\genmodel(\bx')$
and therefore that $\bx \neq \bx'$. 
Recall from Section \ref{sec:decision_robustness} that being able to say with high confidence that $\bx \neq \bx'$ implies that $\cF$ is not decision-robust and therefore does not comply with regulation. 
In this way, low content diversity reduces the likelihood that $\cF$ passes the audit.\footnote{
	Content diversity can also be understood in the context of Section \ref{sec:insight}. 
	Suppose that the content diversities of $Z$ and $Z'$ are very low. For example, suppose that $Z$ contains only pro-vaccine content and $Z'$ contains only anti-vaccine content.
	Then, the MVUE would learn a strong relationship between vaccines with positive outcomes from $Z$ and vice versa for $Z'$. If $\cQ$ contains a query about vaccines, $\cD$ and $\cD'$ would reflect these strong beliefs. In this way, $\cF$ is less likely to be decision-robust when the content diversity is low. On the other hand, if the content for all users contains both pro- and anti-vaccine content, then $\cD$ and $\cD'$ are more similar. 
}
Note that the Fisher information matrix captures two notions of diversity---the diversity of topics in a feed and the diversity of perspectives on each given topic in the feed---simultaneously.

\textbf{Connection between content diversity and the cost of regulation}.
Recall from Algorithm \ref{alg:test} that $\cF$ passes the audit when $ (\usermodel - \tilde{\genmodel}{}')^\top I ( \usermodel ) (\usermodel - \tilde{\genmodel} {}')$ is below some threshold. 
The platform can therefore pass the audit by ensuring that $I(\usermodel)$ is sufficiently ``small''.
By Definition \ref{def:diversity}, 
whether the Fisher information is ``small'' is precisely an indication of how much diversity is in a feed. 

In this way, increasing content diversity gives the platform more leeway when filtering. 
By ``shrinking'' the Fisher information,
the platform obtains more flexibility in setting $(\usermodel - \usermodel{}')$.
Stated differently, if the Fisher information $I(\usermodel)$ is ``large'', 
then the platform is more constrained because $(\usermodel - \usermodel{}')$ must be very small in order for the platform to pass the audit. 
Therefore, if the platform has a high-reward feed that does not pass the audit (i.e., is not in the feasible set), 
then the platform can generally maintain a high reward while complying with the regulation by adding content diversity.\footnote{
	The platform does not increase content diversity indefinitely because the platform 
	must also ensure that other terms in Line \ref{lin:test}---specifically, $(\usermodel - \usermodel{}')$---do not cause the platform to fail the audit. 
} 

Because content diversity is not part of the audit by design, this result is unexpected.
It states that the audit naturally {incentivizes} the platform to include a sufficient amount of content diversity with respect to $\fvec - \fvec'$. 
Returning to Example \ref{ex:covid}, if regulators require that medical advice on COVID-19 be robust to whether  a user is left- or right-leaning, 
then the differences between the medical advice shown to users across the political spectrum is captured by  
$\fvec - \fvec'$. 
Adding content diversity along this dimension means that right-leaning users receive medical advice on COVID-19 not only from right-leaning news outlets, but also from left-leaning ones, and vice versa.

\section{Background \& related work}\label{sec:related_work}

Algorithmic filtering \cite{devito2017algorithms,bozdag2013bias} has the potential to greatly improve  the user experience, but it can also yield unwanted side effects, like the spread of fake news \cite{marwick2017media,chesney2019deep,dcms2019disinformation}, over-representation of polarizing opinions  due to comment ranking \cite{siersdorfer2014analyzing}, amplification of echo chambers due to filter bubbles \cite{flaxman2016filter,pariser2011filter}, or advertising of products based on discriminatory judgments about user interests
\cite{speicher2018potential,sweeney2013discrimination,kim2018discrimination}. 
Although the severity of these effects is contested---for example, some studies argue that political polarization and echo chambers are not always products of internet use or data-driven algorithms \cite{boxell2017greater,hosseinmardi2020evaluating}---social media platforms are under rising scrutiny.

In response,
some platforms have begun to self-regulate \cite{medzini2021enhanced,democracy2018canada}. 
For example, 
Facebook has established an internal ``Supreme Court'' that reviews the company's decisions \cite{facebookOversightBoard}, 
Twitter has banned accounts associated with the spread of conspiracy theories \cite{twitter2021qanon}, 
YouTube has removed videos it views as encouraging violence \cite{youtube2021violence}, 
and so on. 
Self-regulatory practices offer various benefits,
such as the ability to adapt quickly to a changing social media ecosystem 
and 
the comparatively greater information access afforded to internal auditors than external ones.
However, many argue that self-regulation is insufficient and that governmental regulations are necessary in order to ensure that audits are executed by independent bodies.
Several regulations exist, 
such as the EU's
General Data Protection Regulation \cite{gdpr}
and Germany's Network Enforcement Act \cite{bundestag2017act}. 
There are also ongoing efforts, 
such as the push to review Section 230 of the U.S. Communications Decency Act \cite{cda}. 

Designing such regulations and auditing procedures remains a challenging problem, in part due to the number of stakeholders. 
For one, 
there are various legal and social obstacles facing regulations \cite{klonick2017new,brannon2019free,berghel2017lies,obar2015social},
including concerns that regulations might damage free speech or public discourse; violate personal rights or privacy;  transfer agency away from users to technology companies or governmental bodies; draw subjective lines between acceptable and unacceptable behavior;
or set precedents that are difficult to reverse. 
In light of the thriving exchange of goods between users, platforms, advertisers  and influencers that is facilitated by social media, 
many also fear that regulations may hurt innovation, 
lead to worse personalization, 
or block revenue sources \citep{evans2020reg}.

Current efforts to regulate content moderation  generally focus on specific issues, such as whether content is inappropriate (e.g., posts that contain hate speech  or bullying \cite{bbcReg,davidson2017automated}); discriminatory (e.g., race-based advertising \cite{angwin2017housing,speicher2018potential,sweeney2013discrimination,kim2018discrimination}); 
divisive (e.g., comment ranking algorithms that favor polarizing comments \cite{siersdorfer2014analyzing}); insulating (e.g., filter bubbles \cite{flaxman2016filter}), or misleading (e.g., fake news \cite{marwick2017media,chesney2019deep,dcms2019disinformation}). 
These works generally use one of the following strategies:  increasing content diversity (e.g., 
adding heterogeneity to recommendations \citep{bozdag2015breaking,helberger2018exposure}); drawing a line in the sand (e.g., determining whether discrimination has occurred by thresholding the difference between two proportions \citep{chouldechova2017fair}); or finding the origin of the content (e.g., reducing fake news by whitelisting news sources \citep{berghel2017lies}). 
In this work, 
we provide a general procedure such that, given a regulation in counterfactual form, 
an auditor can test whether the platform's filtering algorithm is compliant.
Our aim is to audit with respect to the outcome of interest in order to avoid unwanted side effects. 
Of particular note is that the proposed procedure does not require access to users  or their personal data.

We also consider how an audit affects the platform's ability to maximize an objective function as well as the content that the platform is incentivized to filter for a user. 
Our formulation is an instance of constrained optimization and bears resemblance to robust optimization \cite{bertsimas2011theory,wiesemann2014distributionally}. 
For instance, our definition of the cost of regulation mirrors the ``price'' of robustness studied in other works \cite{bertsimas2004price,bertsimas2011price}. 
Similarly, the performance-regulation trade-off that we discuss echoes the trade-offs that appears in other problems in which there are fairness \cite{bertsimas2011price,feldman2015certifying,kamishima2011fairness} and privacy \cite{campbell2015privacy,hirsch2010law} constraints. 
Our findings that there are conditions under which content diversity aligns the interests of regulators and platforms adds to the conversation on presenting different viewpoints on social media.
For instance, Levy  \cite{levy2021social} finds that users respond well when presented with multiple political viewpoints---which Levy terms counter-attitudinal content---even ones from an opposing political party. 
Increasing content diversity is also at the heart of other methods, including those for bursting filter bubbles \cite{bozdag2015breaking} or improving comment ranking \cite{giannopoulos2015algorithms}.
It is worth a note that our definition of content diversity captures two notions: 
diversity in the topics as well as diversity in the viewpoints on each topic. 

As a final remark, our analysis has parallels with differential privacy \cite{dwork2006calibrating,dwork2014algorithmic} in that it compares outcomes under different interventions 
\cite{wasserman2010statistical}. 
However it differs in the techniques required to study similarity. 
Our work also touches on aspects of but is distinct from social learning and opinion dynamics \cite{sl1, sl2, sl3, sl4} in that we study how information affects the beliefs of individuals.



\newpage

\ack

We would like to thank our anonymous reviewers and area chair for their time and suggestions. 
We would also like to thank the many people who provided feedback along the way, 
including but certainly not limited to Martin Copenhaver, 
Hussein Mozannar, 
Aspen Hopkins, 
Divya Shanmugam, 
and Zachary Schiffer. 
This work was supported in parts by the MIT-IBM project on "Representation Learning as a Tool for Causal Discovery", 
the NSF TRIPODS Phase II grant towards Foundations of Data Science Institute, the
Chyn Duog Shiah Memorial Fellowship, and the Hugh Hampton Young Memorial Fund Fellowship.\\[-4pt]

\emph{This work has been accepted to the 35th Conference on Neural Information Processing Systems (NeurIPS 2021), Sydney, Australia.}\\[-10pt]

\bibliography{ref.bib}{}
\bibliographystyle{plain}


\newpage
\section*{Checklist}

\begin{enumerate}
	
	\item For all authors...
	\begin{enumerate}
		\item Do the main claims made in the abstract and introduction accurately reflect the paper's contributions and scope?
		\answerYes
		\item Did you describe the limitations of your work?
		\answerYes We describe the scope of our work in Section \ref{sec:related_work}. We also discuss the limitations of our work throughout Sections \ref{sec:problem_statement}-\ref{sec:cost_of_reg}. We make a few additional remarks in Appendix \ref{app:discussion}.
		\item Did you discuss any potential negative societal impacts of your work? \answerYes We discuss the potential negative impacts of the work in Appendix \ref{app:discussion} and in Section \ref{sec:cost_of_reg}.
		\item Have you read the ethics review guidelines and ensured that your paper conforms to them? \answerYes
	\end{enumerate}
	
	\item If you are including theoretical results...
	\begin{enumerate}
		\item Did you state the full set of assumptions of all theoretical results?
		\answerYes
		\item Did you include complete proofs of all theoretical results?
		\answerYes All proofs are given in the Appendix. 
	\end{enumerate}
	
	\item If you ran experiments...
	\begin{enumerate}
		\item Did you include the code, data, and instructions needed to reproduce the main experimental results (either in the supplemental material or as a URL)?
		\answerNA We did not run experiments. 
		\item Did you specify all the training details (e.g., data splits, hyperparameters, how they were chosen)?
		\answerNA
		\item Did you report error bars (e.g., with respect to the random seed after running experiments multiple times)?
		\answerNA
		\item Did you include the total amount of compute and the type of resources used (e.g., type of GPUs, internal cluster, or cloud provider)?
		\answerNA
	\end{enumerate}
	
	\item If you are using existing assets (e.g., code, data, models) or curating/releasing new assets...
	\begin{enumerate}
		\item If your work uses existing assets, did you cite the creators?
		\answerNA We do not use existing assets or curate/release assets. 
		\item Did you mention the license of the assets?
		\answerNA
		\item Did you include any new assets either in the supplemental material or as a URL?
		\answerNA
		\item Did you discuss whether and how consent was obtained from people whose data you're using/curating?
		\answerNA
		\item Did you discuss whether the data you are using/curating contains personally identifiable information or offensive content?
		\answerNA
	\end{enumerate}
	
	\item If you used crowdsourcing or conducted research with human subjects...
	\begin{enumerate}
		\item Did you include the full text of instructions given to participants and screenshots, if applicable?
		\answerNA We did not use crowdsourcing or conduct research with human subjects. 
		\item Did you describe any potential participant risks, with links to Institutional Review Board (IRB) approvals, if applicable?
		\answerNA
		\item Did you include the estimated hourly wage paid to participants and the total amount spent on participant compensation?
		\answerNA
	\end{enumerate}
	
\end{enumerate}

\newpage
\appendix

\begin{center}
	\textbf{\LARGE Appendix}
	\\[16pt]
\end{center}

\setcounter{theorem}{0}
\section{Toy example}\label{app:toy}

In this section, we provide and expand upon a toy example. 
Recall that the inputs $\fvec$ and $\fvec'$ need not correspond to real users 
but could instead represent hypothetical users.

\begin{example}
	Suppose that the regulatory guideline requires that users in the same geographical location receive similar weather forecasts. 
	This can be written as ``the weather forecasts that are selected by $\cF$ should be similar for all users in the same geographical location'',
	and $\cS$ could be a randomly generated set of user pairs, where each pair corresponds to two (hypothetical) users in the same geographical location, and $\cS$ could contain pairs across many locations. \label{ex:weather}
\end{example}

\begin{figure}[h!]
	\centering
	\includegraphics[width=0.85\textwidth]{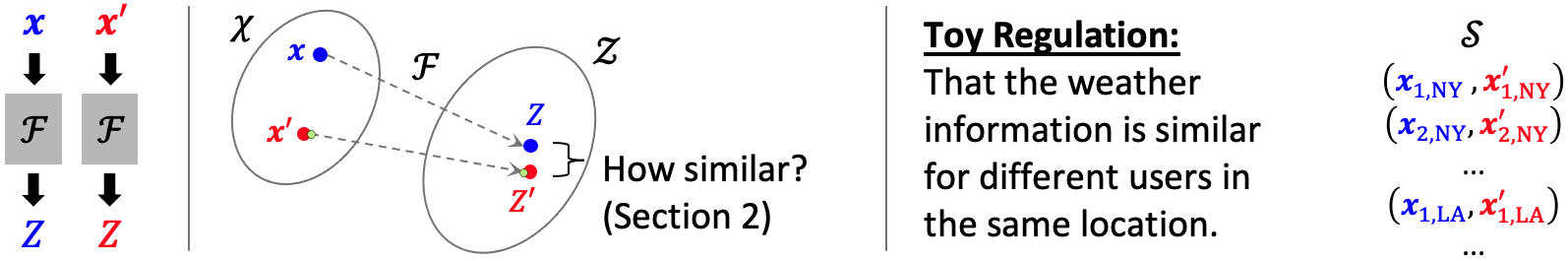}
	\caption{Visualization of toy example.}\label{fig:CF}
\end{figure}

Figure  \ref{fig:CF} visualizes counterfactual regulations. 
In the left-most panel, a filtering algorithm $\cF$ takes in counterfactual inputs $\bx$ and $\bx'$ and produces the feeds $Z$ and $Z'$. 
The middle panel visualizes this relationship graphically. 
Because a counterfactual regulation requires that $\cF$ behave similarly under $\bx$ and $\bx'$, 
the regulation is effectively requiring that the feeds $Z$ and $Z'$ are sufficiently similar (or, graphically, that they are close in $\cZ$).
The question of how to quantify ``similarity'' is addressed in Section \ref{sec:decision_robustness}.
The toy example in Example \ref{ex:weather} is illustrated in the right-most panel. 
Requiring that the weather information is similar for users in the same location can be tested by randomly selecting pairs of users $(\bx, \bx')$ in the same location,
placing these pairs in $\cS$, 
then running the audit over $\cS$. \\

\begin{figure}[h]
	\includegraphics[width=\textwidth]{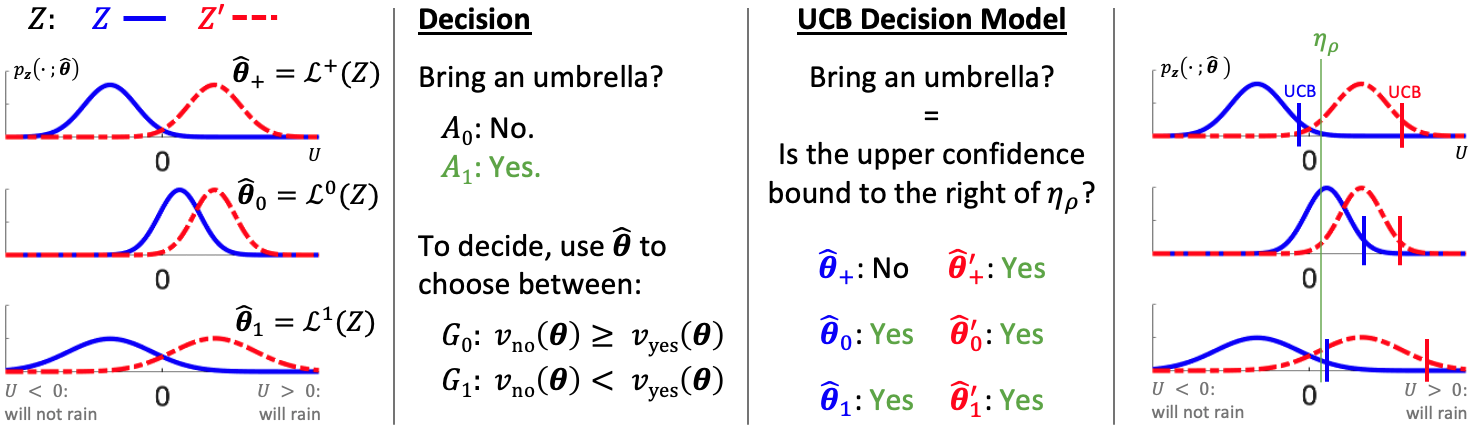} 
	\caption{
		Understanding the role of the MVUE (see Section \ref{sec:insight}). 
	} \label{fig:usermodelfig}
\end{figure}

Figure \ref{fig:usermodelfig} visualizes the intuition behind the MVUE discussed in Section \ref{sec:insight}. 
Specifically, it illustrates why the MVUE $\cL^+$ corresponds to the user whose decisions are most sensitive to $Z$. 
Suppose that a user's feed $Z$ contains content about the chance of rain and the user is deciding whether to bring an umbrella. 
Suppose the content in $Z = \cF(\fvec)$ reflects the actual chance of rain while $Z' = \cF(\fvec')$ contains disproportionately more content suggesting that it will rain. 
Perhaps $Z$ is shown to children while $Z'$ is shown to adults to encourage them to buy umbrellas.
Let there be three hypothetical users with estimators $\cL^+$, $\cL^1$, and $\cL^2$, as indicated in the left-most panel. (As discussed in Section \ref{sec:insight}, every user ingests their content differently. An estimator $\cL$ is simply a mapping from a feed $Z$ to the user's belief. In this toy example, we study three hypothetical users.)

In the left-most panel, 
each plot visualizes one of the user's belief $p_{\bz}(\cdot ; \hat{\genmodel})$ about whether it will rain $U$ given $Z$ (in solid blue) or given $Z'$ (in dashed red), 
where $U < 0$ suggests that it will not rain and
$U > 0$ suggests that it will rain.
$\cL^+$ is the MVUE, $\cL^1$ is a biased estimator (it is biased to the right such that the user tends to believe it will rain today no matter what the forecasts say), and $\cL^2$ is an unbiased estimator with higher variance than $\cL^+$ (the user does not put much confidence in the forecasts, so its belief is less ``peaky'' than the MVUE's).

In the second  panel, we write the decision of whether to bring an umbrella in terms of the setup in Section \ref{sec:insight}. Specifically, if the user knew that the true chance of rain as given by $\genmodel$, they would bring an umbrella if $v_{\text{yes}} (\genmodel) > v_{\text{no}}(\genmodel)$ and would not bring an umbrella, otherwise, 
where $v_i$ denotes the value that the user places on each option.
For example, $v_i$ may balance the user's dislike of carrying an umbrella with the user's  dislike of walking in the rain, and $v_i$ may differ across individuals. 

The third panel explains how the user would make a decision under the upper confidence bound (UCB) decision model, a popular model in the bandit literature \cite{bubeck2012regret}. 
Here $\genmodel$ captures the reward and sampling history of the bandit (i.e., the past experiences of a user with respect to rain and weather forecasts),
and $v_i (\genmodel)$ would give the UCB of arm $i$ (i.e., of the choices to and not to bring an umbrella).
As written in the third panel, 
under the UCB decision model, 
the user would choose to bring an umbrella if the UCB of their belief is to the right of some threshold $\eta_\rho$ (for details on $\eta_\rho$, see Section \ref{sec:insight})
and would not bring an umbrella, otherwise. 

In order to understand what decision each of the three users corresponding to $\cL^+$, $\cL^1$, and $\cL^2$ would do, 
examine the fourth  (right-most) panel. 
Let the threshold $\eta_\rho$ be given by the thin vertical line, as marked. 
Let the UCB for $Z$ and $Z'$ be given by the blue and red thick lines, as indicated in the top-most  plot (the blue line is always to the left of the red line). 
We see that, for this choice of $\eta_\rho$, 
the MVUE would not choose to bring an umbrella under $Z$ but would choose to do so under $Z'$. 
We also see that the users corresponding to $\cL^1$ and $\cL^2$ would choose to bring umbrellas under both $Z$ and $Z'$. 
These choices are also written in the third-panel from the left.

The goal of Figure \ref{fig:usermodelfig} is to provide intuition for why the MVUE corresponds to the ``most gullible user'': the hypothetical user whose decisions are most affected by their content.
Recall that $Z'$ indicates that it is more likely to rain than $Z$. 
As illustrated in the example, the MVUE is the only estimator among the three for which the user's decision is different when shown $Z$ versus $Z'$, 
whereas the users corresponding to the other estimators are less affected by the content that they see: their decisions remain the same under $Z$ and $Z'$. 
This example confirms the discussion in Section \ref{sec:insight} that the decisions of the MVUE are more sensitive to whether the content is $Z$ or $Z'$ than the decisions of other users (i.e., other estimators). 
Therefore, if we wish to enforce similarity between users' decision-making behavior under $Z$ and $Z'$---or, equivalently, under the inputs $\bx$ and $\bx'$---then the MVUE provides an ``upper bound'' on the sensitivity of users' decisions to their content. 

For intuition on why the MVUE is the ``most gullible user'',
recall that the MVUE is the unbiased estimator with the lowest variance. 
Suppose that a user's estimate  $\cL( Z)$ differs from $\cL^+( Z)$.
By definition, this estimate is biased or has higher variance.
When biased, the user's estimate is consistently pulled by some factor other than $Z$.
For example, a user who remains pro-vaccine no matter what content they see has a biased estimator. 
When the user's estimate has higher variance than the MVUE's,
it is an indication that the user places less confidence than the MVUE in what they glean from $Z$.
For example, the user could be skeptical of what they see on social media 
or scrolling very quickly and only reading headlines.
In this way, the MVUE corresponds to the user the user who ``hangs on every word''---whose decisions are most affected by their content $Z$.\\

\begin{figure}[h]
	\centering
		\includegraphics[height=4.45cm]{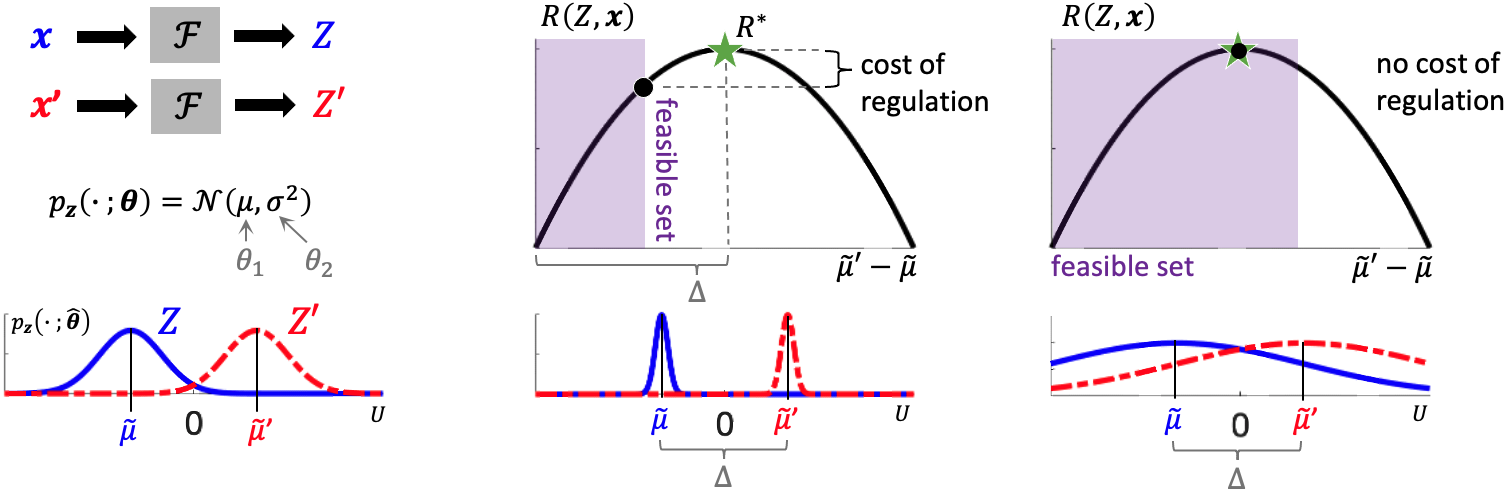}
		\caption{Visualizing the cost of regulation and the connection to content diversity.} \label{fig:cost_of_reg}
\end{figure}

Figure \ref{fig:cost_of_reg} visualizes the cost of regulation and illustrates why increasing content diversity can reduce the cost of regulation. 
In this example, suppose that $p_{\bz}(\cdot ; \genmodel)$ are 1-D Gaussian, as in Example \ref{ex:gaussian}. 
We examine how different choices of $(Z, Z')$ affect (a) the feasible set and (b) the platform's cost of regulation. 
In the left-most panel, we re-iterate that $Z = \cF(\bx)$ and $Z' = \cF(\bx')$. 
As stated above, we assume that $\parFam$ parameterizes the family of 1-D Gaussian. 
The bottom of the left-most panel provides an example of how an estimator $\cL$ would behave if given two feeds $Z$ and $Z'$ where $Z$ is a 1-D Gaussian centered to the left of $0$, $Z'$ is a 1-D Gaussian centered to the right of $0$, and both have the same variance $\sigma^2$. 
For this example, we assume that the variance $\sigma^2$ is the same for $Z$ and $Z'$.
Specifically, the distribution under $\hat{\genmodel} = \cL(Z)$ is plotted in solid blue while the distribution under $\hat{\genmodel} = \cL(Z')$ is plotted in dashed red. For the purposes of this example, one can assume that $\cL = \cL^+$ is the MVUE, and we denote the estimate of $\genmodel$ by $\tilde{\genmodel}$ to be consistent with the notation in the main text. 

The middle and right-most panels visualize the cost of regulation under two different choices of $(Z, Z')$. 
In the middle panel, 
$Z$ and $Z'$ are chosen such that the distributions $p(\cdot \hspace{1pt} ; \tilde{\genmodel})$ under $Z$ and $Z'$ are given in the middle-bottom. 
Specifically, $\tilde{\mu}$ and $\tilde{\mu}'$ are $\Delta$ apart, and
the variance $\sigma^ 2$ is fairly small. 

Suppose that, as in Example \ref{ex:gaussian}, 
$R$ is a function of the means (i.e., of $\tilde{\mu}$ and $\tilde{\mu}'$) but not of the variance $\sigma^2$. 
Furthermore, suppose that the platform maximizes its reward $R$ when $\tilde{\mu}' - \tilde{\mu} = \Delta$, as visualized by the green star in the top middle plot. 
As explained in Section \ref{sec:cost_of_reg}, 
a regulation restricts the platform's choice of feeds from $\cZ$ to the feasible set $\cZ(\cS)$, which is a  subset of $\cZ$. 
In the top middle panel, we visualize the feasible set in purple. 
Specifically, for the given choice of $\sigma^2$ (as plotted in the bottom middle), the feasible set does not include $\tilde{\mu}' - \tilde{\mu} = \Delta$. The maximum reward that the platform can achieve under the regulation for this choice of $\sigma^2$ is indicated by the black dot, 
and the vertical distance between the green star and black dot is the cost of regulation. 

However, if the platform still chooses $Z$ and $Z'$ such that $\tilde{\mu}' - \tilde{\mu} = \Delta$ but increase the variance $\sigma^2$, the story is different. 
In the right-most panel, we show that, for a larger $\sigma^2$, 
the feasible set expands to include $\tilde{\mu}' - \tilde{\mu} = \Delta$. 
As such, the reward-maximizing solution is contained within the feasible set, and there is no cost of regulation. 
In this way, adding a sufficient amount of content diversity can reduce the cost of regulation, thereby allowing the platform to achieve high reward while also complying with the regulation. 

Mathematically, this example is explained by Theorem \ref{thm:cost_of_reg} and discussed in Example \ref{ex:gaussian}. 
Expanding on the discussion in Example \ref{ex:gaussian}, 
recall that, when  $\genmodel = (\mu , \sigma^2)$, $\parFam = \bbR \times \bbR_{\geq 0}$,
$\cP = \{ \mathcal{N}(\mu , \sigma^2) : (\mu , \sigma^2) \in \parFam \}$ is the family of 1-D Gaussian distributions, 
and $R$ is a function of $\mu$ but not $\sigma^2$ (as in Figure \ref{fig:cost_of_reg}), 
then $I	( (\mu, \sigma^2) ) = \text{diag}( \sigma^{-2} , \sigma^{-4}  / 2)$, 
$\bar{\genmodel}_{\Omega} = (0, 1)$,  
and
\begin{align}
	\bv^\top I(\genmodel + \kappa \bar{\genmodel}_\Omega ) \bv = v_1^2 / (\sigma^2 + \kappa) + v_2^2 / (2 ( \sigma^2 + \kappa)^2) \label{eq:thresh_gauss_ex} .
\end{align} 
This quantity becomes very small if $\kappa$ is very large. 
It turns out that making this quantity small is precisely what we want and that
making $\kappa$ large is the same as increasing the content diversity of a feed because $\genmodel + \kappa \bar{\genmodel}_\Omega = (\mu, \sigma^2 + \kappa)$. 
To see this connection, recall that a pair of feeds $Z$ and $Z'$ passes regulation if 
$(\cL^+(Z)  - \cL^+(Z'))^\top I (\cL^+(Z)  ) (\cL^+(Z)  - \cL^+(Z'))  < \frac{2}{m} \chi_r^2 ( 1 - \epsilon ) $, 
Therefore, given a feed $Z^*$ for which $\cL^+(Z^*) = \tilde{\genmodel}{}^*$ and $R(Z^*, \bx)$ is the maximum achievable reward, 
one can create a new feed $Z$ that passes regulation by taking $Z^*$ and increasing its content diversity such that $\cL^+(Z) = (\mu^*, (\sigma^*)^2 + \kappa)$. 
By \eqref{eq:thresh_gauss_ex}, the quantity $(\cL^+(Z)  - \cL^+(Z'))^\top I (\cL^+(Z)  ) (\cL^+(Z)  - \cL^+(Z'))$  can be made arbitrarily small by taking $\kappa$ to be large, which means that $Z$ is in the feasible set. 
Moreover, since $R$ does not depend on the variance, $R(Z, \bx) = R(Z^*, \bx)$. 
Note that this example is highly simplified as an illustration, but the intuition that it provides holds more generally. 


\section{Technical details}\label{app:tech_details}

Recall that the Fisher information matrix $I(\genmodel) \in \mathbb{R}^{r \times r}$ is a positive semi-definite matrix, where the $(i,j)$-th entry is given by: 
\begin{align*}
	[I(\genmodel)]_{ij} = \mathbb{E}_{{\sf \mathbf{z}} \sim p_\bz (\cdot ; \genmodel)} \left[ \frac{\partial}{\partial \theta_i} \log p_\bz ({\sf \bz} ; \genmodel)  \frac{\partial}{\partial \theta_j} \log p_\bz ({\sf \bz} ; \genmodel)  \right]
\end{align*}

Recall that $Z$ is generated by drawing $m$ samples from $p_\bz (\cdot \hspace{1pt} ; \genmodel)$, where $\genmodel \in \parFam$. 
Recall further that $\cL: \cZ  \rightarrow \parFam$ denotes an estimator. 
An estimator $\cL$ is asymptotically normal and efficient if:
\begin{align}
	\sqrt{m} \left( \cL (Z) - \genmodel \right) \stackrel{d}{\rightarrow} \mathcal{N}( \mathbf{0}_r , I^{-1}(\genmodel) ) ,  \label{eq:MLE_asymp_norm}
\end{align}
as $m \rightarrow \infty$ for all $\genmodel \in \parFam$ where $I^{-1}(\genmodel)$ denotes the inverse of the Fisher information matrix at $\genmodel$.

Lastly,  let $\cP = \{ p_{\bz} ( \cdot \hspace{1pt} ; \genmodel ) : \genmodel \in \parFam \}$. 
The regularity conditions on $\cP$ that are discussed in Theorem \ref{thm:reg_hyp_test} are stated as follows. 
\begin{enumerate}
	\item $\parFam$ is a compact and open set of $\mathbb{R}^r$.\label{item:reg_cond_1} 
	\item Identifiability: $\bz \iid p_{\sf \bz}(\cdot; \genmodel)$ for $\genmodel \in \parFam$ and $\genmodel_1 \neq \genmodel_2$ implies $p_{\bz}(\cdot ; \genmodel_1)$ and $p_{\bz}(\cdot ; \genmodel_2)$ are distinct.
	\item 
	Common support: The support of $p_\bz(\cdot ; \genmodel)$ is independent of $\genmodel \in \parFam$. 
	\item Differentiability: All the second-order partial deriviates of $\log p_\bz(\bz ; \genmodel)$ with respect to $\genmodel$ exist and are continuous in $\genmodel$. \label{item:reg_cond_4}
	\item For any $\genmodel_0 \in \parFam$, there exists a neighborhood of $\genmodel_0$ and a function $\Pi(\bz)$, where $\mathbb{E}_{{\sf \bz} \sim p_\bz(\cdot ; \genmodel_0)} [ \Pi(\bz) ] < \infty$ and 
	\begin{align*}
		\qquad \qquad \left| \frac{\partial^2}{\partial \theta_i \partial \theta_j} \log p_\bz({\sf \bz} ; \genmodel ) \right| \leq \Pi(\bz) ,
	\end{align*}
	for all $\bz \in \mathcal{Z}$, all $\genmodel$ in the neighborhood of $\genmodel_0$, and $i,j \in [r]$.
	\item If $\genmodel^*$ is the data generating parameter: \label{item:reg_cond_6}
	\begin{enumerate}
		\item $\frac{\partial}{\partial \theta_i} \log p_\bz(\bz ; \genmodel^*)$ is square integrable for all $i \in [r]$. 
		\item $\mathbb{E}_{{\sf \bz} \sim p_\bz(\cdot ; \genmodel^*)} \left [ \frac{\partial}{\partial \theta_i} \log p_\bz({\sf \bz} ; \genmodel^*) \right] = 0$
		\item The Fisher information at $\genmodel^*$ satisfies:
		\begin{align*}
			\qquad \qquad [I(\genmodel^*)]_{ij}  
			&= \mathbb{E}_{{\sf \bz} \sim p_\bz(\cdot ; \genmodel^*)} \left [ \frac{\partial}{\partial \theta_i} \log p_\bz({\sf \bz} ; \genmodel^*)  \frac{\partial}{\partial \theta_j} \log p_\bz({\sf \bz} ; \genmodel^*) \right] 
			\\
			&= -\mathbb{E}_{{\sf \bz} \sim p_\bz(\cdot ; \genmodel^*)} \left [ \frac{\partial^2 }{\partial \theta_i \theta_j} \log p_\bz({\sf \bz} ; \genmodel^*) \right] 
		\end{align*}
		\item  Invertibility: Fisher information $I(\genmodel^*)$ at $\genmodel^*$  is positive-definite and invertible.
	\end{enumerate}
	\item Either all distributions in $\cP$ are lattice distributions on the same lattice or 
	each $p_\bz(\cdot ; \genmodel) \in \cP$ has a component such that, for a constant $k$ that is independent of $\genmodel$, 
	the $k$-fold convolution has a bounded density with respect to the Lebesgue measure. 	\label{item:reg_cond_7}
	\item For all $\genmodel \in \parFam$, there exists an unbiased estimator $\cL$ such that $\bbE [ | \cL (Z)  | ] < \infty$. 
	\label{item:reg_cond_8}
	\item $\parFam$ is a convex set. 
\label{item:reg_cond_9}
\end{enumerate}
There are variations on these regularity conditions, and we refer the reader to other works for further details \cite{bahadur1964fisher,lehmann2006theory,ly2017tutorial}. 
The compactness requirement in Condition \ref{item:reg_cond_1} and the continuity requirement in Condition \ref{item:reg_cond_4} ensure the existence of the MLE.
The remaining statements in Conditions \ref{item:reg_cond_1} through \ref{item:reg_cond_6} ensure the asymptotic normality of the MLE. 
Conditions \ref{item:reg_cond_7}-\ref{item:reg_cond_8} ensure the asymptotic normality of the MVUE (cf. \cite{portnoy1977asymptotic} for details).
Condition \ref{item:reg_cond_9} ensures the existence of $I((\genmodel + \genmodel')/2)$ for $\genmodel, \genmodel'  \in \parFam$, and this condition can be relaxed by providing a slightly different statement of Theorem \ref{thm:reg_hyp_test} (e.g., letting $\genmodel^*$ be $\genmodel(\bx)$ or $\genmodel(\bx')$).
\section{Proofs}\label{app:proofs}

\subsection{Theorem \ref{thm:reg_hyp_test}}

\begin{theorem} 
	Consider \eqref{eq:H_test}.
	Let $\genmodel^* = (\genmodel(\fvec)  + \genmodel (\fvec')) / 2$. 
	Suppose that $\bz_i$ and $\bz'_i$ are drawn i.i.d. from $p(\cdot \hspace{1pt} ; \genmodel (\fvec) )$ and $p(\cdot \hspace{1pt} ; \genmodel (\fvec') )$, respectively, for all $i \in [m]$
	and $\cP = \{ p_{\bz} ( \cdot \hspace{1pt} ; \genmodel ) : \genmodel \in \parFam \}$ is a regular exponential family that meets the regularity conditions stated in Appendix \ref{app:tech_details}.
	If $\hat{H}$ is defined as:
	\begin{align*}
		\hat{H} = H_1 
		\iff
		(\cL^+(Z)  - \cL^+(Z'))^\top I ( \genmodel^* ) (\cL^+(Z)  - \cL^+(Z'))  \geq \frac{2}{m} \chi_r^2 ( 1 - \epsilon ) , 
	\end{align*}
	then $\bbP( \hat{H} = H_{1} | H = H_{0} ) \leq \epsilon$ as $m \rightarrow \infty$. 
	If $r = 1$, then $\hat{H}$ is the UMPU test as $m \rightarrow \infty$, i.e., $\lim_{m \rightarrow \infty} \bbP(\hat{H} = H_0 | H = H_1 ) = \alpha^*(\epsilon)$. 
\end{theorem}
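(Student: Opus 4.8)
The plan is to exploit the asymptotic normality and efficiency of the MVUE, guaranteed by the regularity conditions of Appendix \ref{app:tech_details}, so that the test statistic becomes a quadratic form in asymptotically Gaussian quantities. First I would establish the false-positive claim. Because $\bz_i$ and $\bz'_i$ are drawn independently, $\cL^+(Z)$ and $\cL^+(Z')$ are independent, and by efficiency $\sqrt{m}(\cL^+(Z) - \genmodel(\fvec)) \stackrel{d}{\rightarrow} \mathcal{N}(\mathbf{0}_r, I^{-1}(\genmodel(\fvec)))$, and similarly for $Z'$. Under $H_0$ we have $\genmodel(\fvec) = \genmodel(\fvec') = \genmodel^*$, so $\sqrt{m}(\cL^+(Z) - \cL^+(Z')) \stackrel{d}{\rightarrow} \mathcal{N}(\mathbf{0}_r, 2 I^{-1}(\genmodel^*))$, the covariance doubling because the two independent terms share the same limiting covariance (this is exactly what the factor $\tfrac{2}{m}$ in the threshold anticipates). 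Writing $W$ for this Gaussian limit, the continuous mapping theorem applied to $w \mapsto w^\top I(\genmodel^*) w$ gives $m \cdot (\cL^+(Z) - \cL^+(Z'))^\top I(\genmodel^*)(\cL^+(Z) - \cL^+(Z')) \stackrel{d}{\rightarrow} W^\top I(\genmodel^*) W$. Substituting $W = \sqrt{2}\, I^{-1/2}(\genmodel^*) G$ with $G \sim \mathcal{N}(\mathbf{0}_r, \mathbf{I}_r)$ collapses the limit to $2\|G\|^2 \sim 2\chi^2_r$. Since the test rejects precisely when this quantity exceeds $2\chi^2_r(1-\epsilon)$, the limiting rejection probability under $H_0$ is $\bbP(V > \chi^2_r(1-\epsilon)) = \epsilon$ for $V \sim \chi^2_r$, by definition of the quantile; hence $\bbP(\hat{H} = H_1 \mid H = H_0) \rightarrow \epsilon$, giving the claimed bound asymptotically.

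For the $r = 1$ claim I would first observe that the quadratic-form test collapses to a two-sided Wald test: rejecting when $(\cL^+(Z) - \cL^+(Z'))^2 I(\genmodel^*) \geq \tfrac{2}{m}\chi^2_1(1-\epsilon)$ is identical to rejecting when the standardized gap $|\cL^+(Z) - \cL^+(Z')| / \sqrt{2 I^{-1}(\genmodel^*)/m}$ exceeds the standard-normal quantile $z_{1-\epsilon/2}$, since $\chi^2_1(1-\epsilon) = z_{1-\epsilon/2}^2$ and $\sqrt{2 I^{-1}(\genmodel^*)/m}$ is exactly the asymptotic standard error of the gap. Separately, I would exhibit the exact UMPU test for this problem. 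Placing $\cP$ in natural form with natural parameter $\eta$ (a smooth monotone reparameterization of the scalar $\genmodel$) and sufficient statistic $T$, the two-sample data form a two-parameter exponential family in $(\eta(\fvec), \eta(\fvec'))$; reparameterizing by the parameter of interest $\psi = \eta(\fvec') - \eta(\fvec)$ (so $H_0 : \psi = 0$, equivalent to $H_0$ in \eqref{eq:H_test} by identifiability) and a nuisance parameter whose sufficient statistic is the pooled sum, Lehmann's theory of similar/unbiased tests---conditioning on the complete sufficient statistic for the nuisance parameter and taking the equal-tailed conditional two-sided region---yields the UMPU test $\hat{H}^*_\epsilon$ with Type II error $\alpha^*(\epsilon)$.

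It then remains to show these two tests share the same limiting Type II error. The approach is to compare power functions under local (Pitman) alternatives $\genmodel(\fvec') = \genmodel(\fvec) + h/\sqrt{m}$: under local asymptotic normality the experiment converges to a one-dimensional Gaussian-shift experiment, in which the conditional UMPU region and the two-sided Wald region both converge to the same symmetric acceptance interval centered at the shift, so their powers---and hence Type II errors---share the limit $\alpha^*(\epsilon)$; continuity of $I(\cdot)$ makes the use of $I(\genmodel^*)$ rather than $I(\genmodel(\fvec))$ asymptotically immaterial by Slutsky, since $\genmodel^* \rightarrow \genmodel(\fvec)$ along such sequences. I expect this last step---pinning down that the conditional (nuisance-eliminated) UMPU region and the unconditional Wald region coincide in the limit---to be the main obstacle, both because it requires the LAN machinery to align the conditioned statistic with the Wald statistic, and because the assertion is only nontrivial under local alternatives: against any fixed $\genmodel(\fvec) \neq \genmodel(\fvec')$ both Type II errors vanish and the equality degenerates to $0 = 0$, so care is needed to identify the regime in which $\alpha^*(\epsilon) \in (0, 1-\epsilon)$ is the meaningful quantity.
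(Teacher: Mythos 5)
Your argument for the false-positive claim is essentially the paper's own proof: both invoke asymptotic normality and efficiency of the MVUE under the stated regularity conditions, use independence of $Z$ and $Z'$ to get $\sqrt{m}(\cL^+(Z) - \cL^+(Z')) \stackrel{d}{\rightarrow} \mathcal{N}(\mathbf{0}_r, 2I^{-1}(\genmodel^*))$ under $H_0$, and conclude that the quadratic form is asymptotically $\tfrac{2}{m}\chi^2_r$, so the threshold $\tfrac{2}{m}\chi^2_r(1-\epsilon)$ delivers FPR $\leq \epsilon$ in the limit; your explicit use of the continuous mapping theorem and the substitution $W = \sqrt{2}\,I^{-1/2}(\genmodel^*)G$ is just a more careful writing of the same step. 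Where you diverge is the $r=1$ claim. The paper disposes of it in one sentence, asserting that the two-sided test on the mean of a univariate Gaussian is the UMPU test of size $\epsilon$ and citing Casella and Berger, Section 8.3 --- i.e., it identifies the limiting experiment as Gaussian and appeals to the known exact UMPU result there. You instead construct the exact finite-sample UMPU test for the two-sample exponential family via Lehmann's conditional-test machinery (conditioning on the complete sufficient statistic for the nuisance parameter) and then propose to match its limiting power to that of the Wald-type test under local Pitman alternatives via LAN. Your route is more rigorous about what ``asymptotically UMPU'' means, and your observation that against a fixed alternative both Type II errors vanish --- so that the equality $\lim_m \bbP(\hat{H}=H_0 \mid H=H_1) = \alpha^*(\epsilon)$ is only non-degenerate under a local-alternative reading --- is a genuine point the paper glosses over; the cost is that you must actually carry out the LAN comparison between the conditional UMPU region and the unconditional Wald region, which the paper's citation sidesteps entirely. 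Both approaches reach the stated conclusion; yours buys precision about the asymptotic optimality claim, the paper's buys brevity.

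One further remark: the paper's proof also contains a closing paragraph (not strictly needed for the probabilistic statements you proved) using the data processing inequality on the Markov chain $\bx \rightarrow \genmodel(\bx) \rightarrow Z \rightarrow \cD$ to argue that a guarantee stated in terms of $Z, Z'$ implies the corresponding guarantee for any decisions $\cD, \cD'$ and any query set. Your proposal omits this, but since the theorem as stated is entirely about the test on $Z$ and $Z'$, that omission is not a gap in the proof of the statement itself.
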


\begin{proof}

	The regularity conditions required for Theorem \ref{thm:reg_hyp_test} are stated in Appendix \ref{app:tech_details}. 
	The definition of asymptotic normality and efficiency is also given in Appendix \ref{app:tech_details}.

	Under the regularity conditions, 
	we have three results. 
	First, 
	under Conditions  \ref{item:reg_cond_1} and \ref{item:reg_cond_4}, 
	the MLE exists and, from Conditions \ref{item:reg_cond_1}-\ref{item:reg_cond_6}, 
	it is asymptotically normal and efficient \cite{bahadur1964fisher,lehmann2006theory,ly2017tutorial}.
	Second,
	under Conditions \ref{item:reg_cond_1}-\ref{item:reg_cond_8}, 
	the MVUE exists and is also asymptotically normal and efficient \cite{portnoy1977asymptotic}.
	Third, Condition \ref{item:reg_cond_9} ensures the existence of $I((\genmodel + \genmodel')/2)$ for $\genmodel, \genmodel'  \in \parFam$, and this condition can be relaxed by providing a slightly different statement of Theorem \ref{thm:reg_hyp_test} (e.g., letting $\genmodel^*$ be $\genmodel(\bx)$ or $\genmodel(\bx')$).
	
	By the second result,
	\begin{align*}
		\sqrt{m}( \cL^+ (Z)  - \genmodel(\fvec) ) \stackrel{d}{\rightarrow} \mathcal{N}( \mathbf{0}_r , I^{-1} (\genmodel (\fvec) )  )
	\end{align*}
	 as $m \rightarrow \infty$, where $\bz_i \iid p(\cdot ; \genmodel)$ and $Z = (\bz_1, \hdots, \bz_m)$.
	Therefore, as $m \rightarrow \infty$, 
	\begin{align}
		\sqrt{m}( \cL^+ (Z)  - \genmodel(\fvec) - \cL^+ (Z') + \genmodel( \fvec' ) ) \stackrel{d}{\rightarrow} \mathcal{N}( \mathbf{0}_r , I^{-1} (\genmodel ( \fvec ) )  + I^{-1} (\genmodel ( \fvec' ) ) ) \label{eq:asym_norm_diff}
	\end{align}

	Recall the hypothesis test \eqref{eq:H_test} from Section \ref{sec:formal_goal_H_test}.
	When $H = H_0$, 
	$\genmodel(\fvec) = \genmodel(\fvec') = \genmodel^*$. 
	Therefore, by \eqref{eq:asym_norm_diff},
	\begin{align}
		\sqrt{m}( \cL^+ (Z)  - \cL^+ (Z') ) \stackrel{d}{\rightarrow} \mathcal{N}( \mathbf{0}_r , 2 I^{-1} (\genmodel^*  )  ) \label{eq:asym_norm_diff_F0}
	\end{align}
	as $m \rightarrow \infty$ when $H = H_0$, 
	which implies that, as $m \rightarrow \infty$, 
	the  two-sample, two-sided hypothesis test in \eqref{eq:H_test}
	becomes a two-sample, one-sided test of on the mean of a multivariate Gaussian random variable. 
	Under \eqref{eq:asym_norm_diff_F0}, 
	\begin{align*}
		(\cL^+(Z)  - \cL^+(Z'))^\top I ( \genmodel^* ) (\cL^+(Z)  - \cL^+(Z'))  \sim \frac{2}{m} \chi_r^2
	\end{align*}
	Therefore, if $\hat{H}$ satisfies:
	\begin{align}
		\hat{H} = H_1 
		\iff
		(\cL^+(Z)  - \cL^+(Z'))^\top I ( \genmodel^* ) (\cL^+(Z)  - \cL^+(Z'))  \geq \frac{2}{m} \chi_r^2 ( 1 - \epsilon ) \label{eq:F_rule}
	\end{align}
	then $\hat{H}$ has a FPR $\leq \epsilon$, as desired.  
	
	Although $\hat{H}$ is not necessarily the UMPU for $r > 1$, 
	it is well known that it is the UMPU test of size $\epsilon$ for the univariate Gaussian case, i.e., when $r = 1$ (cf. Section 8.3 of \cite{casella2021statistical}).

	One may have noticed that the hypothesis test in \eqref{eq:F_rule}  (and \eqref{eq:thm_H1_def}) uses $Z$ and $Z'$ to choose between $H_0$ and $H_1$, 
	whereas in the original problem statement in Section \ref{sec:problem_statement}, 
	the hypothesis test uses $\cD$ and $\cD'$. 
	In other words, decision robustness requires that one cannot determine whether $\bx \neq \bx'$---or, equivalently, $\genmodel(\bx) \neq \genmodel(\bx')$---from $\cD$ and $\cD'$ for any $\cQ$. 
	Although $\cD$, $\cD'$, and $\cQ$ do not appear in the analysis above, 
	the test in \eqref{eq:F_rule} ensures (approximate asymptotic) decision robustness by designing a test that works for all $\cQ$ and consequent decisions $\cD$ and $\cD'$. 
	
	To see this,  we first note that, if expressed as a Markov chain, the random variables of interest would be written as $\bx \rightarrow \genmodel \rightarrow Z \rightarrow \cD$. 
	By the data processing inequality, 
	any test that uses $Z$ is stronger (i.e., has a higher TPR and lower FPR) than the corresponding test using $\cD$.
	Intuitively, since $\cD$ is determine by $Z$ and $\cD'$ from $Z'$, 
	if one cannot determine whether $\bx \neq \bx$ (or $\genmodel(\bx) \neq \genmodel(\bx')$) from $Z$ and $Z'$, 
	then one cannot do any better given $\cD$ and $\cD'$ for any $\cQ$. 
	We can therefore conclude that the guarantees of \eqref{eq:F_rule} hold for all $\cQ$ and the original hypothesis test in \eqref{eq:H_test}. 
\end{proof}

Note that the regularity conditions required in Theorem \ref{thm:reg_hyp_test} are fairly mild. 
Recall that exponential families capture a broad class of distributions of interest. 
In particular, they are the only families of distributions that have finite-dimensional sufficient statistics, and a distribution almost always belongs to an exponential family if it has a conjugate prior. 
Regular exponential families are canonical exponential families if the natural parameter space is an open set in $\parFam$. 
The remaining regularity conditions are common and often implicitly assumed in discussions of the MVUE or MLE.

\textbf{Alternate result}.
We would like to remark that the audit could be modified to use the MLE instead of the MVUE. 
Using the MLE would provide the same guarantee as Theorem \ref{thm:reg_hyp_test}. 
In fact, it would require fewer conditions, as follows. 
\begin{theoremapp} 
	Consider \eqref{eq:H_test}.
	Let $\genmodel^* = (\genmodel(\fvec)  + \genmodel (\fvec')) / 2$. 
	Suppose that $\bz_i$ and $\bz'_i$ are drawn i.i.d. from $p(\cdot \hspace{1pt} ; \genmodel (\fvec) )$ and $p(\cdot \hspace{1pt} ; \genmodel (\fvec') )$, respectively, for all $i \in [m]$
	and $\cP = \{ p_{\bz} ( \cdot \hspace{1pt} ; \genmodel ) : \genmodel \in \parFam \}$ meets Conditions\ref{item:reg_cond_1} through \ref{item:reg_cond_6} stated in Appendix \ref{app:tech_details}.
	If $\hat{H}$ is defined as:
	\begin{align}
		\hat{H} = H_1 
		\iff
		(\cL^+(Z)  - \cL^+(Z'))^\top I ( \genmodel^* ) (\cL^+(Z)  - \cL^+(Z'))  \geq \frac{2}{m} \chi_r^2 ( 1 - \epsilon ) , 
	\end{align}
	then $\bbP( \hat{H} = H_{1} | H = H_{0} ) \leq \epsilon$ as $m \rightarrow \infty$. 
	If $r = 1$, then $\hat{H}$ is the UMPU test as $m \rightarrow \infty$. 
\end{theoremapp}
Therefore, one may wish to use the MLE instead of the MVUE in Algorithm \ref{alg:test}. 
The reason that one may wish to use the MVUE is because it provides another guarantee (cf. Proposition  \ref{prop:mvue}).

\subsection{Proposition \ref{prop:mvue}}

\begin{proposition}
	Consider \eqref{eq:user_dm}. 
	Let $G\in \{ G_0 , G_1 \}$ denote the true (unknown) hypothesis.
	Suppose that $v_0 , v_1 : \parFam \rightarrow \bbR$ are affine mappings
	and there exists $\bu : \bbR^n \rightarrow \mathcal{U}$ such that, for $\cF(\bx) = \{ \bz_1, \hdots, \bz_m \}$, one can write $\bu(\bz_i) \sim p_{\bu}( \cdot ; \hspace{1pt} v_1(\genmodel(\bx)) - v_0( \genmodel(\bx) ))$ for all $i \in [m]$.
	Then, if the UMP test with a maximum FPR of $\rho$ exists, 
	it is given by the following decision rule: reject $G_0$ 
	(choose $A_1$) 
	when the minimum-variance unbiased estimate $\usermodel = \cL^+ (Z)$ satisfies $v_1( \usermodel )  - v_0 (\usermodel ) > \eta_\rho$ where $\bbP( v_1 ( \usermodel ) - v_0 ( \usermodel  ) > \eta_\rho | G= G_0 ) = \rho$;
	otherwise, accept $G_0$ (choose $A_0$). 
\end{proposition}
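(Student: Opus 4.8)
The plan is to reduce the problem to a one-sided test on a single real parameter and then invoke the Karlin--Rubin theorem together with the Lehmann--Scheff\'e characterization of the MVUE. First I would set $\lambda = v_1(\genmodel) - v_0(\genmodel)$. Since $v_0$ and $v_1$ are affine, $\lambda$ is an affine function of $\genmodel$, and the hypotheses in \eqref{eq:user_dm} become the one-sided pair $G_0 : \lambda \leq 0$ versus $G_1 : \lambda > 0$. By the stated condition, the reduced observations $\bu(\bz_1), \hdots, \bu(\bz_m)$ are i.i.d.\ draws from the one-parameter family $\{ p_{\bu}(\cdot \hspace{1pt} ; \lambda) \}$, so the decision problem depends on $Z$ only through data governed by the scalar $\lambda$, and it suffices to analyze this one-parameter problem.

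Within the regular exponential-family setting used throughout the paper, the family $\{ p_{\bu}(\cdot \hspace{1pt} ; \lambda) \}$ has monotone likelihood ratio in a natural sufficient statistic $S = S(\bu(\bz_1), \hdots, \bu(\bz_m))$; moreover monotone likelihood ratio is essentially what allows a UMP test to exist for a one-sided composite hypothesis, which is exactly the regime the proposition assumes. Next I would apply the Karlin--Rubin theorem: the UMP level-$\rho$ test of $G_0 : \lambda \le 0$ against $G_1 : \lambda > 0$ rejects $G_0$ precisely when $S$ exceeds a threshold $c_\rho$, and because the family is stochastically increasing in $\lambda$, the worst-case false positive rate over the composite null $\lambda \le 0$ is attained at the boundary $\lambda = 0$; thus $c_\rho$ is fixed by $\bbP(S > c_\rho \mid \lambda = 0) = \rho$.

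The crux is then to rewrite this threshold rule in terms of $\usermodel = \cL^+(Z)$. Because $v_0, v_1$ are affine and $\usermodel$ is the MVUE of $\genmodel$, the scalar $v_1(\usermodel) - v_0(\usermodel)$ is an unbiased estimator of $\lambda$; by the uniqueness of the MVUE (Lehmann--Scheff\'e), it is \emph{the} MVUE of $\lambda$, and hence a function of the complete sufficient statistic $S$. I would argue that this function is strictly increasing in $S$ in the regular one-parameter exponential family, so that the events $\{ S > c_\rho \}$ and $\{ v_1(\usermodel) - v_0(\usermodel) > \eta_\rho \}$ coincide for the corresponding threshold $\eta_\rho$, with the size condition transferring to $\bbP( v_1(\usermodel) - v_0(\usermodel) > \eta_\rho \mid G = G_0 ) = \rho$. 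Combining the three steps yields exactly the stated decision rule: reject $G_0$ (choose $A_1$) when $v_1(\usermodel) - v_0(\usermodel) > \eta_\rho$, and accept $G_0$ (choose $A_0$) otherwise.

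I expect the monotone identification in the final step to be the main obstacle: one must carefully invoke completeness and sufficiency to guarantee that the MVUE is a function of $S$, and then establish that this function is monotone, so that thresholding $S$ and thresholding $v_1(\usermodel) - v_0(\usermodel)$ define the same rejection region. A secondary subtlety is correctly locating the worst-case false positive rate of the composite null at the boundary $\lambda = 0$, which is precisely where the monotone-likelihood-ratio structure is needed.
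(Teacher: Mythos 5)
Your strategy is genuinely different from the paper's. The paper proves this proposition by directly invoking a theorem of Ghobadzadeh et al.\ (their Theorem 1), which states that for a one-sided composite test on a scalar parameter $\gamma$ against a known boundary $\gamma_b$, \emph{whenever the UMP test exists} it must be given by thresholding the MVUE of $\gamma$. The paper then needs only two observations: affineness of $v_0, v_1$ implies that $v_1(\usermodel) - v_0(\usermodel)$ is the MVUE of $\lambda = v_1(\genmodel) - v_0(\genmodel)$, and the substitutions $\bw = \bu(\bz)$, $\gamma_b = 0$ put the problem exactly in the lemma's form. That argument matches the conditional shape of the statement: existence of the UMP test is the hypothesis, the characterization is the conclusion, and no distributional structure beyond that is required.

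Your route---reduce to a scalar one-sided test, invoke Karlin--Rubin under monotone likelihood ratio, then translate the threshold on the sufficient statistic $S$ into a threshold on $v_1(\usermodel) - v_0(\usermodel)$ via Lehmann--Scheff\'e and a monotonicity argument---is a standard way to reach the same conclusion, but as written it proves a narrower statement and leaves its key step open. Concretely: (i) the proposition does not assume that $\{ p_{\bu}(\cdot \, ; \lambda) \}$ is an exponential family or has MLR; it assumes only that the UMP test exists. MLR is sufficient for existence but not necessary, so your argument covers only a subclass of the situations the proposition addresses. (ii) The step you yourself flag as the crux---that the MVUE of $\lambda$ is a \emph{strictly increasing} function of $S$, so that the two rejection regions coincide---is asserted rather than established. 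Lehmann--Scheff\'e gives that the MVUE is \emph{some} function of the complete sufficient statistic, but monotonicity of that function requires a separate argument; it is not automatic for a general one-parameter family reparametrized by $\lambda$. The paper's citation of the Ghobadzadeh et al.\ characterization avoids both issues at once, which is presumably why the authors chose that route.
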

\begin{proof}
	We begin with a result from Ghobadzadeh et al. \cite{ghobadzadeh2009role}. 
	
	\begin{lemmaApp}[\citep{ghobadzadeh2009role}, Theorem 1]\label{lem:UMP}
		Consider a one-sided binary composite hypothesis test of $\bar{G}_0: \mathbf{w} \sim p_{\sf \mathbf{w} }( \cdot ; \gamma), \gamma \leq \gamma_b$ against $\bar{G}_1: \mathbf{w} \sim p_{\sf \mathbf{w} }(  \cdot ; \gamma ),  \gamma > \gamma_b$, where $\gamma_b$ is known. Let $\Gamma_0 = \{ \gamma : \gamma \leq \gamma_b\}$, $\Gamma_1 = \{ \gamma : \gamma > \gamma_b\}$, and $\Gamma = \Gamma_0 \cup \Gamma_1$. Let $\rho$ be the maximum allowable false positive rate. If the uniformly most powerful (UMP) test  exists, then it is defined by the following decision rule: reject $\bar{G}_0$ when the minimum variance unbiased estimator (MVUE) of $\gamma \in \Gamma$, denoted by $\tilde{\gamma}^+$, satisfies $\tilde{\gamma}^+ > \gamma_\rho$, where $P( \tilde{\gamma}^+ > \gamma_\rho | H = \bar{G}_0) = \rho$. 
	\end{lemmaApp}
	
	Our result follows directly from two observations. 
	First, since $v_0, v_1$ are affine, if $\usermodel$ is the MVUE of $\genmodel$, 
	then $v_1(\usermodel) - v_0(\usermodel)$ is also the MVUE of $v_1 (\genmodel) - v_0(\genmodel)$.
	Second, our setting is equivalent to that in Lemma \ref{lem:UMP} with the substitutions 
	$\gamma = v_1 (\genmodel) - v_0(\genmodel)$, $\gamma_b = 0$, and
	$\bw = \bu(\bz)$. 
\end{proof}

\subsection{Theorem \ref{thm:cost_of_reg}}

Recall from footnote \ref{ft:static_R} that $R$ can be time-varying. 
For example,, the platform's revenue sources may change with time.
As long as the time-varying objective function $R^\tindex$ satisfies the conditions in Theorem \ref{thm:cost_of_reg} at every time step, then the result holds unchanged at every time step.

\begin{theorem}
	Suppose there exists $\Omega \subset [r]$ where $1 < |\Omega| < r$
	such that
	$R(\genmodel_1, \bx) = R(\genmodel_2, \bx)$ if $\theta_{1,i} = \theta_{2,i}$ for all $i \notin \Omega$.
	Suppose that, for any $\genmodel \in \parFam$, $\beta > 0$ and $\bv \in \bbR^r$, 
	there exist a vector $\bar{\genmodel}_{\Omega}$ where $\bar{\theta}_{\Omega,i} = 0$ for all $i \notin \Omega$
	and a constant $\kappa > 0$ such that $\bv^\top I ( \genmodel + \kappa \bar{\genmodel}_{\Omega} ) \bv < \beta$ and $\genmodel + \kappa \bar{\genmodel}_{\Omega} \in \parFam$. 
	Then, if $m < \infty$, 
	there exists a $\cZ$ such that the cost of regulation for $\bx$ under Algorithm \ref{alg:test} is $0$. 
\end{theorem}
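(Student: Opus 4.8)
The plan is to exhibit a content space $\cZ$ rich enough that the unconstrained reward optimum for $\bx$ is itself matched by a feed that passes Algorithm~\ref{alg:test}; then the two maxima defining the cost of regulation coincide and $C=0$. Write $T = \frac{2}{m}\chi_r^2(1-\epsilon)$ for the audit threshold in Line~\ref{lin:test}. Since $m<\infty$ this is a fixed finite constant, and for $\epsilon \in (0,1)$ it is strictly positive (if $\epsilon=0$ the threshold is infinite and every feed passes, so the claim is immediate). The one structural fact I would extract from the first hypothesis is that reward ignores the coordinates in $\Omega$: if $\genmodel_1$ and $\genmodel_2$ agree on every $i \notin \Omega$ then $R(\genmodel_1,\bx)=R(\genmodel_2,\bx)$, so---using the overloaded notation $R(\cL^+(Z),\bx)=R(Z,\bx)$---perturbing a feed only along the $\Omega$-coordinates of its MVUE parameter leaves its reward untouched.

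First I would pin down the optimum and the content space. Invoking the generative-model representation together with its ``without loss of generality'' footnote (Section~\ref{sec:problem_statement}), I take $\cZ$ large enough that every $\genmodel \in \parFam$ is realized as $\cL^+(Z)$ for some $Z \in \cZ$; then $\max_{W \in \cZ} R(W,\bx) = \max_{\genmodel \in \parFam} R(\genmodel,\bx)$, attained at some $\genmodel^\star$. Let $\genmodel^\star{}'$ be the parameter the platform's rule assigns to the counterfactual input $\bx'$, and set the fixed vector $\bw = \genmodel^\star - \genmodel^\star{}' \in \bbR^r$. If the pair $(\genmodel^\star,\genmodel^\star{}')$ already passes the audit there is nothing to do, so assume it fails.

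Next I would inject content diversity along $\Omega$ to shrink the Fisher information in exactly the direction that matters. Applying the second hypothesis at base point $\genmodel^\star$ with tolerance $\beta = T$ and direction $\bv = \bw$ produces a vector $\bar{\genmodel}_\Omega$ supported on $\Omega$ and a scalar $\kappa>0$ with
\begin{align*}
	\bw^\top I\!\left(\genmodel^\star + \kappa\bar{\genmodel}_\Omega\right)\bw < T \qquad\text{and}\qquad \genmodel^\star + \kappa\bar{\genmodel}_\Omega \in \parFam .
\end{align*}
I then reassign the two feeds through their MVUE parameters, $\cL^+(Z) = \genmodel^\star + \kappa\bar{\genmodel}_\Omega$ for $\bx$ and $\cL^+(Z') = \genmodel^\star{}' + \kappa\bar{\genmodel}_\Omega$ for $\bx'$, each realized by a feed of the enriched $\cZ$. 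Because $\bar{\genmodel}_\Omega$ vanishes off $\Omega$, the push moves only $\Omega$-coordinates, so $R(Z,\bx) = R(\genmodel^\star,\bx)$ and the reward for $\bx$ is preserved (as is that for $\bx'$, keeping the argument non-degenerate). The shared push cancels in the difference, $\cL^+(Z)-\cL^+(Z') = \genmodel^\star - \genmodel^\star{}' = \bw$, so the quantity tested in Line~\ref{lin:test} is precisely
\begin{align*}
	\big(\cL^+(Z)-\cL^+(Z')\big)^\top I(\cL^+(Z))\big(\cL^+(Z)-\cL^+(Z')\big) = \bw^\top I\!\left(\genmodel^\star + \kappa\bar{\genmodel}_\Omega\right)\bw < T .
\end{align*}
Hence Algorithm~\ref{alg:test} returns $\hat{H}_\epsilon = H_0$, the feed realizing $\cL^+(Z)$ lies in $\cZ(\cS)$, and it attains the unconstrained maximum reward, forcing $\max_{W \in \cZ(\cS)} R(W,\bx) = \max_{W \in \cZ} R(W,\bx)$ and $C=0$.

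The hard part will be the $\parFam$-bookkeeping: the diversity hypothesis only certifies $\genmodel^\star + \kappa\bar{\genmodel}_\Omega \in \parFam$ at the single base point, whereas the cancellation above also needs $\genmodel^\star{}' + \kappa\bar{\genmodel}_\Omega \in \parFam$ for the \emph{same} pair $(\kappa,\bar{\genmodel}_\Omega)$ so that $\bw$ is preserved. I would close this gap using the convexity of $\parFam$ (Condition~\ref{item:reg_cond_9}) and the fact that the $\Omega$-direction admits unbounded pushes, or, failing a common push, re-apply the hypothesis at both base points and take a sufficiently large shared $\kappa$. The Gaussian instance of Example~\ref{ex:gaussian}, where $\bar{\genmodel}_\Omega = (0,1)$ inflates the shared variance so that the form collapses like $(\mu^\star-\mu^\star{}')^2/(\sigma^2+\kappa)\to 0$, is the concrete check that both feasibility and reward-preservation survive this step.
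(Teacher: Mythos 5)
Your proposal is correct and follows essentially the same route as the paper's proof: apply the Fisher-information hypothesis at the reward-maximizing parameter with $\beta$ equal to the audit threshold and $\bv$ equal to the parameter difference, shift both counterfactual parameters by the same $\Omega$-supported vector so the difference cancels and the reward is unchanged, then take $\cZ$ expressive enough to realize the shifted parameters. The $\parFam$-membership gap you flag for the shifted counterfactual parameter is real but is also present (silently) in the paper's own proof, so your treatment is if anything slightly more careful.
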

\begin{proof}
	Let $Z_* \in  \arg \max_{W \in \cZ} R(W , \bx)$ be a reward-maximizing feed and $\usermodel_{*} = \cL^+(Z_* )$. 
	Similarly, let $Z_*' \in  \arg \max_{W \in \cZ} R(W , \bx')$ and $\usermodel{}'_{*} = \cL^+( Z'_{*} )$. 
	Recall that, under the statement conditions, there exists a vector $\bar{\genmodel}_{\Omega}$ where $\bar{\theta}_{\Omega,i} = 0$ for all $i \notin \Omega$
	and a constant $\kappa > 0$ such that $\bv^\top I ( \genmodel + \kappa \bar{\genmodel}_{\Omega} ) \bv < \beta$ and $\genmodel + \kappa \bar{\genmodel}_{\Omega} \in \parFam$ for any $\genmodel \in \parFam$, $\beta > 0$, and $\bv \in \bbR^r$. 
	Let us take
	$\genmodel =  \usermodel_{*} = \cL^+( Z_* )$,
	$\beta =  \frac{2}{m} \chi^2_r ( 1 - \epsilon )$,
	and $\bv = \usermodel{}'_{*} - \usermodel_{*}$.
	Finally, let $\usermodel = \usermodel_{*} + \bar{\genmodel}_{\Omega}$,
	where $\bar{\genmodel}_{\Omega}$ is defined as given in the statement.
	Then, 
	\begin{align*}
		(\usermodel{}'_{*} - \usermodel_{*} )^\top I (  \usermodel_{*} + \bar{\genmodel}_{\Omega} ) (\usermodel{}'_{*} -  \usermodel_{*} )
		< \frac{2}{m} \chi^2_r ( 1 - \epsilon ) .
	\end{align*}
	Letting $\usermodel{}'  = \usermodel{}'_{*}  + \bar{\genmodel}_{\Omega}$ and recalling $\usermodel = \usermodel_{*} + \bar{\genmodel}_{\Omega}$ gives
	\begin{align*}
		(\usermodel' - \usermodel )^\top I (  \usermodel ) (\usermodel{}' -  \usermodel )
		< \frac{2}{m} \chi^2_r ( 1 - \epsilon ) ,
	\end{align*}
	which implies that, 
	as long as $\cZ$ is large enough such that there exist $Z$ and $Z'$ such that $\cL^+(Z)  = \usermodel$  and  $\cL^+(Z')  = \usermodel{}'$,
	then both $Z$ and $Z'$ comply with the regulation.
	In other words, as long as $\cZ$ contains content that is expressive enough, we know that $Z , Z' \in \cZ(\cS)$.
	
	It remains to show that the  cost of regulation is $0$. 
	To do so, we show that $Z$, which is in the feasible set, 
	achieves the maximum reward $\max_{W \in \cZ} R(W , \bx) = R(\usermodel_{*} , \bx)$. 
	That the cost of regulation is $0$ follows from the fact that $R (\usermodel_{*}  , \bx) = R(\usermodel_{*}  + \bar{\genmodel}_{\Omega}  , \bx )$ because $\bar{\theta}_{\Omega,i} = 0$ for $i \notin \Omega$.
\end{proof}

\section{Additional discussion} \label{app:discussion}

\subsection{Two remarks}\label{app:relaxing_assumptions}

Recall that we made two simplifications in the main text for readability. 
As noted in the main text, these simplifications do not change our main findings, 
which implies that our results hold under conditions more general than those given in the main text. 

First, recall from footnote \ref{ft:static_R} that $R$ can be time-varying, i.e., let the objective function that the platform wishes to maximize at time step be given by $R^\tindex$. 
This allows our analysis to accommodate settings in which the platform's objectives (e.g., revenue sources) change with time. 
Allowing the objective function to vary in time does not change our conclusions. 
Our findings with respect to $R$ appear in Theorem \ref{thm:cost_of_reg} and the discussion that follows. 
As apparent in the proof of Theorem \ref{thm:cost_of_reg}, 
adopting a time-varying $R^\tindex$ leads to the same result as long as $R^\tindex$ satisfies the conditions in the theorem statement at the time step $t$ of interest. 

Second, recall from footnote \ref{ft:more_info_estimator} that a user's beliefs are often influenced by information other than their feed $Z$. 
For instance, a user's beliefs may depend on conversations that they have offline or on their previous beliefs.
One could incorporate this into our setup by letting $J \in \cJ$ denote any information other than $Z$ that the user uses to form or update their beliefs
and the estimator $\cL : \cZ \times \cJ \rightarrow \parFam$ denote the user's learning behavior (that incorporates information both on and off the platform) such that the user's belief after viewing $Z$ and observing information $J$ is given by $\cL ( Z , J )$. 

The outside information $J$ does not affect our results because it can be absorbed into $\cL$. 
That is, because we are only interested in how the user's beliefs are affected when the user is shown $Z'$ instead of $Z$, and vice versa, $J$ can effectively be ignored. 
Another way to see this is by recalling the definition of decision robustness. 
Decision robustness supposes that there are two identical (hypothetical) users, one of whom is shown $Z$ and the other $Z'$. 
For the purpose of comparing the outcomes under $Z$ and $Z'$, 
$J$ could be treated as part of the original identical users.

\subsection{Impact and consequences}

Our hope is that this work can contribute to the ongoing conversations about social media and its regulation. 
In light of the difficulties in designing and enforcing a regulation, we focus on the latter half of the process
by proposing an auditing procedure.
Auditing social media remains a challenging topic because changes to the ecosystem can have far-reaching consequences. 
As such, we sought to consider the various stakeholders in the system. 

In particular, we studied our framework from the perspectives of the auditor, platform, and user. 
The proposed test focuses on a given pair of inputs. 
We made this choice intentionally to prevent issues that often arise when a regulatory test focuses only on average behavior, which can sometimes result in good outcomes for most individuals but unsatisfactory outcomes for a small subset of the population (i.e., a minority group). 
We considered the perspective of the auditor by acknowledging the difficulties in designing regulations that are enduring and adaptable while also being precise and implementable. 
To this end, our main contribution is a test that translates counterfactual regulations into a principled regulatory procedure. 
We also consider the platform's perspective by studying how the audit affects the platform's ability to maximize some objective function (e.g., revenue, user engagement, a combination of these factors, and more). 
This discussion returns to the user's perspective by examining how the audit changes the feed that the platform is incentivized to show users with particular attention to the content diversity of users' feeds. 

To the best of our abilities, we attempt to acknowledge and address the impact of our work by considering various perspectives of our proposal, explicitly mentioning what problems are within the scope of this work and pointing to appropriate references. However, there may be angles that we have missed. There is also the potential to misuse the proposed framework. For instance, if a platform decided to adopt our procedure as a self-regulatory measure, the outcome would depend on how seriously the platform engages in conversations on designing the counterfactual feeds. 
Another potential misuse would be adversarially designing the features that represent content such that the regulation is ineffective. However, a good-faith effort to choose and test these features appropriately should resolve this issue. 
One might also be concerned with user privacy. 
In response, we provide several comments. 
First, the proposed test does not require user-specific information. 
Second, the inputs $(\bx, \bx')$ need not represent real users, 
and we would in fact recommend that they correspond to hypothetical users. 
If the audit uses hypothetical users, then the main way that user information is revealed to the auditor is via the content that appears in the feeds that the auditor uses to audit because much of the content on social media is generated by users themselves. 
Although this issue seems unavoidable, 
one encouraging feature of the audit is that the auditor only requires access to the feature vectors (or embedding) of each piece of content.
Therefore, as long as the auditor has no intention of unmasking the identity of users, 
the test could be run over these features,
and this data could be immediately discarded afterwards. 
The only output that would be preserved is the outcome of the audit. 
Lastly, one implicit source of bias could be in the selection of the model family $\Theta$, which is a decision made by the auditor. 
We choose to leave $\Theta$ unspecified because doing so means that our analysis can be generalized to any $\parFam$ of interest. However, the auditor should test different choices of $\parFam$ and observe the outcomes. 
Recall that $\parFam$ captures the set of possible generative models (or, in the context of Section \ref{sec:insight}, possible cognitive models).
In choosing the model family $\parFam$, a simple $\parFam$ is more tractable and interpretable while a complex $\parFam$ is more general.

\end{document}